\newcommand{\Pul}{P_{\text{UL}}}
\newcommand{\Pc}{P_{\text{C}}}
\newcommand{\hdl}{h_{\text{DL}}}
\newcommand{\hul}{h_{\text{UL}}}
\newcommand{\Omegaul}{\Omega_{\text{UL}}}
\newcommand{\Omegadl}{\Omega_{\text{DL}}}
\newcommand{\Pdl}{P_{\text{DL}}}
\newcommand{\dd}{\text{d}}
\newcommand{\E}{\mathbb{E}}
\newcommand{\e}{{\rm e}}
\newcommand{\req}{\overset{\underset{!}{}}{=}}
\theoremstyle{plain}
\newtheorem{corollary}{Corollary}
\newtheorem{theorem}{Theorem}
\newtheorem{proposition}{Proposition}
\theoremstyle{definition}
\theoremstyle{remark}
\begin{document}
% paper title can use linebreaks \\ within to get better formatting as desired % Do not put math or special symbols in the title.
%\title{Performance Analysis of a Wireless Powered Sensor with Fixed Transmit Power}
%\title{Performance Analysis of a Wireless Powered Communication with Finite Energy Storage} %% under fading
%\title{On-Off Transmission of Wireless Powered Communication with Finite/Infinite Energy Storage} %% under fading
\title{On-Off Transmission Policy for Wireless Powered Communication with Energy Storage}
%\title{Performance Analysis of a Wireless Powered Communication link with fading} %% under fading

% Keywords Energy harvesting; online transmission policy; storage theory; Finite storage; Markov chains on a general state space; Wireless Powered Communication; RF energy harvesting; sensor networks; limiting distribution; integral equation; Moran's dam model 
\author{\IEEEauthorblockN{Rania Morsi, Diomidis S. Michalopoulos, and Robert Schober}
\IEEEauthorblockA{Institute of Digital Communications, Friedrich-Alexander-University Erlangen-N\"urnberg (FAU), Germany}
}

% use for special paper notices
%\IEEEspecialpapernotice{(Invited Paper)}

% make the title area
\maketitle

% As a general rule, do not put math, special symbols or citations in the abstract, no keywords
\begin{abstract}
In this paper, we consider an energy harvesting (EH) node which harvests energy from a radio frequency (RF) signal broadcasted by an access point (AP) in the downlink (DL). The node stores the harvested energy in an energy buffer and uses the stored energy to transmit data to the AP in the uplink (UL). We consider a simple transmission policy, which accounts for the fact that, in practice, the EH node may not have knowledge of the EH profile nor of the UL channel state information. In particular, in each time slot, the EH node transmits with either a constant desired power or remains silent if not enough energy is available in its energy buffer. For this simple policy, we use the theory of discrete-time continuous-state Markov chains to analyze the limiting distribution of the stored energy for finite- and infinite-size energy buffers. Moreover, we take into account imperfections of the energy buffer and the circuit power consumption of the EH node. For a Rayleigh fading DL channel, we provide the limiting distribution of the energy buffer content in closed form. In addition, we analyze the average error rate and the outage probability of a Rayleigh faded UL channel and show that the diversity order is not affected by the finite capacity of the energy buffer.  Our results reveal that, for medium to high signal-to-noise ratio (SNRs), the optimal target transmit power of the EH node is less than the average harvested power and increases with the capacity of the energy buffer.
\end{abstract}

%/energy constrained
%power-limited
%%%%%%%%%%%%%%%%%%%%%%%%%%%%%%%%%%%%%%%%%%%%%%%%%%%%%%%%%%%%%%%%%%%%%%%%%%%%%%%%%%%%%%%%%%%%%%%%%%%%%%%%%%%%%%%%%%
\section{Introduction}
%Sustainable operation of battery-powered wireless communication networks demands the need for renewable energy sources to address the lifetime bottleneck problem. 
%Harvesting energy from natural (solar, wind, thermal) or man-made (piezo-electric, vibration) energy sources presents a promising solution for meeting the power demands of such networks.
The performance of battery-powered wireless communication networks, such as sensor networks, is limited by the lifetime of the network nodes. Periodic replacement of the nodes' batteries is costly, inconvenient, and sometimes impossible. The lifetime bottleneck problem of energy-constrained wireless networks thus demands harvesting energy from renewable energy sources (e.g., solar, wind, thermal, vibration) to ensure a sustainable network operation. The harvested energy can then be used by the energy harvesting (EH) node to transmit data to its designated receiver. However, the aforementioned energy sources are intermittent and uncontrollable. For example, solar and wind energy are weather dependent and are not available indoors. In contrast, radio frequency (RF) energy is a viable energy source which is partially controllable and can be provided on demand to charge low-power devices \cite{Kansal_2007}. %Therefore, when a sufficient number of nodes run out of energy, the network may fail to perform its designated task. %However, RF energy is charachterized by low power density \cite{}, thus it targets low-power devices like sensor nodes. 
%The lifetime bottleneck problem of energy-constraint wireless networks thus demands the use of renewable energy sources to ensure sustainable network operation. 
%Harvesting energy from the environment (e.g., solar, wind, thermal, vibration) presents a promising solution for meeting the power demands of such networks. 
%Therefore, nodes with an exhausted battery are not replaced which in turn reduces the network size so the network may fail to perform its designated task.
%On the other hand, RF energy is a viable energy source which is partially controllable, weather-independent and can be provided on demand \cite{Kansal_2007}. 
 %when the sensor nodes are placed in a hazardous environment or embedded inside the human body.

% Literature review on EH sensor networks
 %Another source of randomness is the time-varying fading of the communication link. 
A common feature of EH communication networks is the randomness of the amount of harvested energy and the randomness of the fading affecting the information link. %energy harvested by the EH node.
%For instance, solar/wind energy varies throughout the day and the harvested energy from an RF signal varies due to time-varying fading. Furthermore, the information signal transmitted by the EH node encounters also time-varying fading, which introduces another source of randomness.
Therefore, one main objective of energy management polices for EH networks is to match the energy consumption profile of the EH node to the random energy generation profile of the EH source and to the random information channel \cite{Kansal_2007,WPC_TDMA,WPC_SDMA,Sharma2010,Ulukus_Yener_2011,RuiZhang2012,Outage_RuiZhang_2012}. %For example, the authors of \cite{Kansal_2007} introduced the concept of energy neutral operation of an EH system, where the energy used by the system is always less than the energy harvested. Energy neutrality is thus a condition for an EH system to operate perpetually. 
In \cite{WPC_TDMA,WPC_SDMA}, a harvest-then-transmit protocol is considered for a multiuser system with RF wireless power transfer (WPT) in the downlink (DL) and wireless information transfer (WIT) in the uplink (UL), where the users' sum rate or equal throughput is maximized on a per-slot basis. In \cite{Sharma2010}, throughput and mean delay optimal transmission policies, which stabilize the data queue of an EH sensor node over an infinite horizon, are proposed  in a time-slotted setting. In \cite{Ulukus_Yener_2011}, optimal transmission policies that maximize the throughput by a deadline or minimize the transmission completion time are proposed for an EH node with finite energy storage in a continuous time setting. Optimal power allocation solutions that maximize the throughput and minimize the outage probability over a finite horizon for an EH node with infinite energy storage in a time-slotted setting are reported in \cite{RuiZhang2012} and \cite{Outage_RuiZhang_2012}, respectively. %The model used in \cite{WPC_TDMA,WPC_SDMA} assumes that the RF harvested energy in a time slot is used only in the same slot. 
%Therefore, the use of an energy buffer, such as a rechargeable battery and/or a supercapacitor \cite{Culler_2005}, allows storing excess energy for future use and thus reduces the probability of energy shortage. 
% which provides another source of randomness for EH communication networks. %perennially=perpetually
%For instance, the solar/wind energy availability varies throughout the day. Also, the harvested energy from an RF transmitter varies due to the time-varying fading nature of the wireless channel.%Furthermore, the transmitted information signal from the EH node to its designated receiver encounters time-varying fading, which provides another source of randomness for EH communication networks

Optimal offline transmission policies typically require non-causal knowledge of energy and channel state information (CSI) at the EH node, whereas optimal online solutions are typically based on dynamic programming which is computationally intensive even for a small number of transmitted symbols, see  \cite{Ulukus_Yener_2011,RuiZhang2012} and references therein. Therefore, these optimal policies may not be feasible in practice. For example, typical EH wireless sensor networks are expected to comprise many small, inexpensive sensors with limited computational power and energy storage. In such networks, even causal CSI may not be available at the EH nodes nor at the energy source.

Motivated by these practical considerations, in this paper, we consider a simple online transmission policy, where the CSI and the EH profile are not available at the EH node nor at the energy source. In particular, an access point (AP) transmits an RF signal with a constant power in the DL and the EH node harvests the received RF energy and uses the stored energy to transmit data to the AP in the UL. We consider an on-off transmission policy, where in each time slot, the EH node transmits with either a constant desired power or remains silent if the stored energy can not support the desired UL transmit power. This is unlike the transmission policy considered in \cite{Morsi_ICC2014}, where the output power is either a constant desired power or a lower power if not enough energy is available in the energy buffer. The on-off transmission policy considered in this paper is motivated by the fact that a constant transmit power allows the use of efficient power amplifiers at the EH node. We model the stored energy by a discrete-time continuous-state Markov chain and provide its limiting distribution for both infinite and finite energy storage, when the DL channel is Rayleigh fading. Under this framework, we analyze the average error rate (AER) and the total outage probability (which includes both outages due to missed transmission opportunities as well as channel outages) of a Rayleigh fading information link. We show that, surprisingly, the diversity order of the total outage probability  is independent of the storage capacity. Furthermore, we show that, except for high outage probabilities $(>0.5)$, the optimal desired UL power which minimizes the total outage probability is always less than the average harvested power and increases with the capacity of the energy buffer. The proposed framework also takes into account system non-idealities such as non-zero circuit power consumption and imperfections of the energy buffer. 

\section{System Model}
\label{s:System_model}
%In this section, we present
We consider a time-slotted point-to-point single-antenna EH system with DL WPT and UL WIT. In particular, the system consists of a node with an EH module which captures the RF energy transferred by an AP in the DL and uses the harvested energy to transmit its backlogged data in the UL. The considered system employs frequency-division-duplex, where WPT and WIT take place concurrently on two different frequency bands. The AP and the EH node are assumed to not have knowledge of the instantaneous DL and UL CSI, nor of the amount of harvested energy. Next, we describe the communication, EH, and storage models as well as the considered system imperfections.%(FDD) The AP and the EH node are assumed to have no instantaneous knowledge of the DL and the UL CSI, respectively, nor of the amount of harvested energy.
\subsection{Communication Model}%Only statistical knowledge of both UL and DL channels is assumed available at the EH node. 
In time slot $i$ (defined as the time interval $[i,i+1)$\footnote{The time slot is assumed to be of unit length. Hence, we use the terms energy and power interchangeably.}), the EH node transmits data to the AP with an UL power given by
\begin{equation}
\Pul(i)=M\mathds{1}_{B(i)>M}=\begin{cases} 0 & B(i)\leq M \\ M & B(i) > M\end{cases},
\label{eq:Pul_policy}
\end{equation}
where $\mathds{1}_A=1$ if event $A$ is true, and $\mathds{1}_A=0$ otherwise. $B(i)$ is the residual stored energy at the beginning of time slot $i$ and $M$ is the desired constant UL transmit power. The UL channel is assumed to be flat block fading, i.e., the channel remains constant over one time slot, and changes independently from one slot to the next. The channel power gain sequence $\{\hul(i)\}$ is a stationary and ergodic process with mean $\Omegaul=\E[\hul(i)]$, where $\E[\cdot]$ denotes expectation. Additive white Gaussian noise (AWGN) with variance $\sigma_n^2$ impairs the received signal at the AP.
\subsection{EH Model}
In time slot $i$, the EH node also collects $X(i)$ units of RF energy broadcasted by the AP and stores it in its energy buffer.  We assume that the energy replenished in a time slot may only be used in future time slots. The DL channel is also assumed to be flat block fading with  a stationary and ergodic channel power gain sequence $\{\hdl(i)\}$,  assumed to be unknown at the AP, and $\Omegadl=\E[\hdl(i)]$. We adopt the EH receiver model in \cite{WIPT_Architecture_Rui_Zhang_2012}, where the harvested energy in time slot $i$ is given by $X(i)=\eta\Pdl\hdl(i)$, where $0<\eta< 1$ is the RF-to-DC conversion efficiency of the EH module and $\Pdl$ is the constant DL transmit power of the AP. The energy replenishment sequence $\{X(i)\}$ is an independent and identically distributed (i.i.d.) stationary and ergodic process with mean $\bar{X}=\eta\Pdl\Omegadl$, probability density function (pdf) $f(x)$, and complementary cumulative distribution function (ccdf) $\bar{F}(x)=\mathbb{P}(X(i)>x)$, where $\mathbb{P}(\cdot)$ denotes the probability of an event.%\footnote{UL WIT and DL WPT occupy the whole time slot in frequency division duplex (FDD) systems. In time division duplex (TDD) systems, UL WIT occupy the \emph{first} half of the time slot followed by DL WPT. With this setup, our energy storage model is general for both FDD and TDD systems.}
\subsection{Storage Model}
\label{ss:storage_model}
The harvested energy $X(i)$ is stored in an energy buffer, such as a rechargeable battery or a supercapacitor, with storage capacity $K$. The dynamics of the storage process $\{B(i)\}$ are given by the storage equation%and/or a supercapacitor \cite{Culler_2005}
\begin{equation}
\begin{aligned}
B(i+1)&=\min\left(B(i)-\Pul(i)+X(i),K\right),\\
&=\min\left(B(i)-M\mathds{1}_{B(i)> M}+X(i),K\right).\\
\end{aligned}
\label{eq:general_storage_equation}
\end{equation} % where $[x]^+=\max(x,0)$.
The storage process $\{B(i)\}$  in (\ref{eq:general_storage_equation}) is a discrete-time Markov chain on a continuous state space $S$, where  $S=[0,K]$ and  $S=[0,\infty)$ for finite- and infinite-size energy buffers, respectively. 
%%%%%%%%%%%%%%%%%%%%%%%%%%%%%%%%%%%%%%%%%%%%%%%%%%%%%%%%%%%%%%%%%%%%%%%%%%%%%%%%%%%%%%%%%%%%%%%%%%%%%%%%%%%%%%%%%
%\section{Considerations of Circuit Power Consumption, Power Amplifier Inefficiency and Storage Inefficiency}
\subsection{Consideration of Imperfections}
\label{s:imperfections}
We consider imperfections due to the power consumed by the EH node circuitry during transmission and the non-idealities of the energy buffer. In particular, we consider the following imperfections: (a) To produce an RF power of $\Pul$, the power amplifier of the EH node consumes a total power of $\alpha\Pul$, where $\alpha> 1$ is the power amplifier inefficiency. (b) We assume that the circuitry of the EH node consumes a constant power of $\Pc$ during transmission. (c) The energy buffer is characterized by a storage efficiency $0<\beta<1$, where if $X$ amount of energy is applied at the input of the buffer, only an amount of $\beta X$ is stored. 

Consequently, the energy buffer dynamics are described by 
\begin{equation}
B(i+1)\!=\!\min\left(B(i)\!-\!(\Pc+\alpha M)\mathds{1}_{B(i)>\Pc+\alpha M}+\beta X(i),K\right)
\label{eq:storage_equation_imperfection}
\end{equation}
%\begin{equation}
%B(i+1)=\begin{cases} \min\left(B(i)+\beta X(i),K\right) & \vspace{-0.55cm}\\
 %\hspace{4.4cm} B(i)\leq \Pc+\alpha M & \vspace{0.1cm} \\ 
 %\min\left(B(i)-(\Pc+\alpha M)+\beta X(i),K\right) & \\
 %\hspace{4.4cm} B(i)> \Pc+\alpha M & \end{cases},
%\label{eq:storage_equation_imperfection}
%\end{equation}
Observe that (\ref{eq:storage_equation_imperfection}) is identical to (\ref{eq:general_storage_equation}) after replacing $M$ by $\tilde{M}\!=\!\Pc+\alpha M$ and $f(x)$ by $\tilde{f}(x)\!=\!\frac{1}{\beta}f\left(\frac{x}{\beta}\right)$. Thus, in the following, we perform the analysis for an ideal system (i.e., $\alpha\!=\!1$, $\beta\!=\!1$, and $\Pc\!=\!0$). For a non-ideal system, all the results in Sections \ref{s:Infinite_buffer}-\ref{s:BER_outage_analysis} hold with the aforementioned substitutions.%, all the results in  apply.% and $\widetilde{\bar{\gamma}}=\frac{\Omegaul \widetilde{\bar{X}}}{\sigma_n^2}$. That is, for the exponentially distributed EH process, $\delta$ is replaced by $\tilde{\delta}=\tilde{\lambda}\tilde{M}$, where $\tilde{\lambda}=\frac{1}{\widetilde{\bar{X}}}=\frac{\lambda}{\beta}$.  i.e. $\widetilde{\bar{X}}=\beta \bar{X}$
\section{Infinite-Capacity Energy Buffer}
\label{s:Infinite_buffer}
In this section, we study the energy storage process in (\ref{eq:general_storage_equation}) for an infinite-capacity energy buffer. We provide conditions for which the convergence to a limiting distribution of the buffer content is either guaranteed or violated. Furthermore, we provide the limiting distribution of the buffer content in closed form when the EH process $\{X(i)\}$ is i.i.d. exponentially distributed,  i.e., for a Rayleigh block fading DL channel.%, i.e., $K\to\infty$
\begin{theorem}\normalfont
For the storage process $\{B(i)\}$ in (\ref{eq:general_storage_equation}) with infinite buffer size, if $M<\bar{X}$, then $\{B(i)\}$ does not possess a stationary distribution\footnote{A stationary distribution of a Markov chain is a distribution such that if the chain starts with this distribution, it remains in this distribution.}. Furthermore, after a finite number of time slots, $\Pul(i)=M$ holds almost surely (a.s).
\label{theo:no_stationary_dist}
\end{theorem}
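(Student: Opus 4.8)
The plan is to show that, when $M<\bar X$, the buffer content drifts to $+\infty$ almost surely, which simultaneously rules out a stationary distribution and forces the node to transmit in all sufficiently late slots. The key observation is that the energy drained per slot is capped: since $0\le M\mathds{1}_{B(i)>M}\le M$, equation~(\ref{eq:general_storage_equation}) with $K=\infty$ (so the $\min$ is inactive) yields the one-sided bound
\begin{equation}
B(i+1)\ge B(i)-M+X(i).
\label{eq:dominating_bound}
\end{equation}
Intuitively, the true process drains at most $M$ per slot, whereas the comparison process introduced below always drains exactly $M$; removing the threshold nonlinearity in this crude way is what makes the analysis tractable.

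Next I would introduce the dominating random walk defined by $W(0)=B(0)$ and $W(i+1)=W(i)-M+X(i)$, so that $W(i)=B(0)+\sum_{j=0}^{i-1}\bigl(X(j)-M\bigr)$. A one-line induction using (\ref{eq:dominating_bound}) gives $B(i)\ge W(i)$ for all $i$: the base case is equality, and if $B(i)\ge W(i)$ then $B(i+1)\ge B(i)-M+X(i)\ge W(i)-M+X(i)=W(i+1)$. Because $\{X(i)\}$ is i.i.d.\ with mean $\bar X$, the strong law of large numbers gives $\frac1i\sum_{j=0}^{i-1}\bigl(X(j)-M\bigr)\to\bar X-M$ almost surely; as $\bar X-M>0$ by hypothesis, $W(i)\to+\infty$ a.s., and hence $B(i)\to+\infty$ a.s.

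The two claims then follow quickly. For the almost-sure transmission statement, $B(i)\to\infty$ implies that, outside a null set, there is a (random but finite) slot after which $B(i)>M$ for all later $i$; by (\ref{eq:Pul_policy}) this is precisely $\Pul(i)=M$ eventually. For the non-existence of a stationary distribution, I would argue by contradiction: if a stationary law $\pi$ existed, then initializing $B(0)\sim\pi$ makes $B(i)\sim\pi$ for every $i$, so for each fixed $L>0$ the quantity $\pi([0,L])=\mathbb{P}\bigl(B(i)\le L\bigr)$ is constant in $i$; but $B(i)\to\infty$ a.s.\ together with bounded convergence (the indicator is bounded by $1$) forces $\mathbb{P}\bigl(B(i)\le L\bigr)\to0$, whence $\pi([0,L])=0$ for every $L$, so $\pi$ cannot be a probability measure.

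The main obstacle is conceptual rather than computational: recognizing that the awkward threshold indicator $\mathds{1}_{B(i)>M}$ can be bounded from below by the constant $M$, after which everything reduces to the positive drift of an ordinary i.i.d.\ random walk. One subtlety worth stressing is that the domination $B(i)\ge W(i)$ and the escape $W(i)\to\infty$ hold \emph{uniformly} over the initial value $B(0)\ge0$; this uniformity is exactly what validates the contradiction argument when the chain is started from the putative stationary law rather than from a fixed deterministic point.
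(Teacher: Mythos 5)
Your proof is correct and takes essentially the same route as the paper's: the paper simply defers to the proof of \cite[Theorem 1]{Morsi_ICC2014}, which is the same drift argument---lower-bound $B(i)$ by the random walk $B(0)+\sum_{j=0}^{i-1}\left(X(j)-M\right)$ using $M\mathds{1}_{B(i)>M}\leq M$, invoke the strong law of large numbers with $\bar{X}-M>0$ to get $B(i)\to\infty$ a.s.\ uniformly in the initial state, and deduce both eventual transmission at power $M$ and the escape of mass that rules out a stationary law. The only nitpick is terminological: your $W(i)$ is a lower-bounding (minorized) walk rather than a ``dominating'' one, and your explicit finish via $\pi([0,L])=\lim_{i}\mathbb{P}\left(B(i)\leq L\right)=0$ for every $L$ is a clean way to state the contradiction.
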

\begin{proof} The proof is identical to that of \cite[Theorem 1]{Morsi_ICC2014}. \end{proof}
%\begin{proof} The proof is provided in Appendix \ref{app:no_stationary_dist}. \end{proof}
%\begin{lemma}\normalfont
%For the storage process $\{B(i)\}$ in (\ref{eq:general_storage_equation}) with infinite buffer size, if $\{B(i)\}$ possesses a stationary distribution, then $\E[\Pul(i)]=\bar{X}$.
%\label{lemma:avg_Pul_Xbar}
%\end{lemma}
%\begin{proof} The proof is provided in Appendix \ref{app:avg_Pul_Xbar} \end{proof}
\begin{theorem}\normalfont
For the storage process $\{B(i)\}$ in (\ref{eq:general_storage_equation}) with infinite buffer size, if $M>\bar{X}$, then $\{B(i)\}$ is a stationary and ergodic process which possesses a unique stationary distribution $\pi$ that is absolutely continuous on $(0,\infty)$. Furthermore, the process converges in total variation to the limiting distribution\footnote{A limiting distribution of a Markov chain is a stationary distribution that the chain converges to asymptotically from some initial distribution.} $\pi$ from any initial distribution. 
\label{theo:stationary_dist_infinite}
\end{theorem}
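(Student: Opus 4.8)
The plan is to treat $\{B(i)\}$ as a $\psi$-irreducible, aperiodic, positive Harris recurrent Markov chain and to invoke the standard ergodic theory for continuous-state chains (the Foster--Lyapunov / Meyn--Tweedie framework). Concretely, I would establish three ingredients: (i) a negative-drift (Foster--Lyapunov) condition outside a bounded set, which is where the hypothesis $M>\bar{X}$ enters; (ii) a minorization showing that this bounded set is a small set, together with irreducibility and aperiodicity; and (iii) absolute continuity of the one-step kernel, from which absolute continuity of $\pi$ follows immediately. The three conclusions of the theorem --- existence and uniqueness of $\pi$, convergence in total variation from any initial law, and ergodicity of the stationary process --- then all follow from a single appeal to the Harris ergodic theorem.

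For the drift, the natural Lyapunov function is the identity $V(x)=x$. Writing $g(x)=x-M\mathds{1}_{x>M}$ so that $B(i+1)=g(B(i))+X(i)$ with $K=\infty$, a one-line computation gives $\E[V(B(i+1))\mid B(i)=x]-V(x)=\bar{X}$ for $x\le M$ and $=\bar{X}-M$ for $x>M$. Hence, with $C=[0,M]$, $\delta=M-\bar{X}>0$, and $b=M$, the process satisfies the drift inequality $\E[V(B(i+1))\mid B(i)=x]\le V(x)-\delta+b\,\mathds{1}_{C}(x)$. This is precisely Foster's criterion, and it is the step that fails when $M<\bar{X}$ (cf.\ Theorem~\ref{theo:no_stationary_dist}): the sign of the drift above $M$ flips, and the hypothesis $M>\bar{X}$ is exactly what makes the excursions above the bounded set $C$ recurrent with finite expected return time.

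Next I would verify that $C=[0,M]$ is a small set. For $x\in C$ we have $g(x)=x\in[0,M]$, so $B(i+1)=x+X(i)$ has, for the exponential EH law $f(u)=\lambda\e^{-\lambda u}$ with $\lambda=1/\bar{X}$, a transition density $f(y-x)\ge\lambda\e^{-\lambda y}$ for all $y>M$ uniformly in $x\in C$ (since $x\ge 0$). This yields the one-step minorization $P(x,A)\ge\e^{-\lambda M}\nu(A)$ for all $x\in C$, where $\nu$ is the exponential law restricted to $(M,\infty)$ and renormalized; in particular $C$ is $1$-small. Because $f$ has full support on $(0,\infty)$, from any state one can, with positive probability, both climb to arbitrarily high levels in one step and descend through repeated transmission steps $x\mapsto x-M+X$ with small $X$, giving $\psi$-irreducibility with respect to Lebesgue measure on $(0,\infty)$; the density also makes the chain strongly aperiodic. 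With irreducibility, aperiodicity, the small set $C$, and the drift condition in hand, the Harris ergodic theorem delivers positive Harris recurrence, a unique invariant probability $\pi$, ergodicity of the stationary process, and convergence $\|P^{n}(x,\cdot)-\pi\|_{\mathrm{TV}}\to 0$ for every $x$, hence from any initial distribution by integration.

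Finally, absolute continuity of $\pi$ is essentially free once the dynamics are written as $B(i+1)=g(B(i))+X(i)$: since $X(i)$ is absolutely continuous, $P(x,\cdot)\ll\mathrm{Leb}$ for every $x$, so if $A$ is Lebesgue-null then $P(x,A)=0$ for all $x$ and the invariance identity $\pi(A)=\int P(x,A)\,\pi(\dd x)=0$ forces $\pi\ll\mathrm{Leb}$ on $(0,\infty)$; differentiating gives the density $p(y)=\int f(y-g(x))\,\pi(\dd x)$, which I expect to reuse when deriving the closed form. The routine drift computation and the absolute-continuity argument are both short; I expect the main obstacle to be the careful verification of the irreducibility and small-set (regeneration) conditions, i.e.\ confirming that the chain can reach low states through chains of transmission steps so that the reference measure $\nu$ indeed charges a maximal irreducibility class, since the ergodic theorem cannot be invoked until these technical hypotheses are nailed down.
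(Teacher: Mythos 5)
The paper itself does not contain a proof of this theorem --- it is explicitly deferred to the journal version --- so your proposal can only be measured against the argument the authors almost certainly have in mind, and your route (Foster--Lyapunov drift plus minorization plus the Harris ergodic theorem of Meyn--Tweedie) is exactly that canonical argument. Your drift computation is correct: with $V(x)=x$ and $B(i+1)=g(B(i))+X(i)$, $g(x)=x-M\mathds{1}_{x>M}$, the one-step drift is $+\bar{X}$ on $[0,M]$ and $\bar{X}-M<0$ above $M$, and the inequality $\E[V(B(i+1))\mid B(i)=x]\le V(x)-(M-\bar{X})+M\mathds{1}_{[0,M]}(x)$ holds (with equality on $[0,M]$), which is the correct way to localize where $M>\bar{X}$ enters. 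The absolute-continuity argument for $\pi$ via $P(x,\cdot)\ll\mathrm{Leb}$ and invariance is also complete as written.

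There are, however, two concrete gaps. First, a scope mismatch: the theorem is stated for the \emph{general} i.i.d. EH process of Section II with pdf $f$ (the exponential law enters only in Corollary \ref{theo:stationary_dist_infinite_exp}), but your minorization and your irreducibility argument both invoke $f(u)=\lambda\e^{-\lambda u}$ explicitly. For general $f$ the argument can fail outright --- e.g., if $\mathrm{supp}(f)\subset[a,\infty)$ with $a>0$, states in $(0,a)$ are unreachable and Lebesgue measure on $(0,\infty)$ is not an irreducibility measure; and even when $f$ has full support, a one-step uniform lower bound $\inf_{x\in[0,M]}f(y-x)$ need not exist for a rough density, so one must work at Lebesgue density points of $f$ or use an $m$-step minorization. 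You would either need to add a support hypothesis (compare the explicit ``infinite positive tail'' assumption the authors impose in Theorem \ref{theo:limiting_dist_Finite} for the finite buffer) or restrict the claim to the exponential case. Second, your aperiodicity claim does not follow as stated: your minorization measure $\nu$ is supported on $(M,\infty)$, so $\nu(C)=0$ for $C=[0,M]$ and the standard definition of strong aperiodicity ($\nu(C)>0$) is not met. This is easily repaired --- the same exponential bound gives a minorization on the set $(M,2M]$ with a measure that charges $(M,2M]$ itself, or one can note $P(x,C)>0$ for $x\in C$ and rule out cyclic decompositions directly --- but as written the appeal to the Harris ergodic theorem is not yet licensed. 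Neither defect is fatal for the exponential setting actually used in the remainder of the paper, but both must be closed before the theorem, at its stated level of generality, is proved.
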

%\begin{proof} The proof is provided in Appendix \ref{app:stationary_dist_infinite}. \end{proof}
\begin{proof} The proof will be provided in the journal version of this paper. \end{proof}\vspace{-0.1cm}
\begin{theorem}\normalfont
Consider the storage process $\{B(i)\}$ in (\ref{eq:general_storage_equation}) with infinite buffer size and $M>\bar{X}$. Let $g(x)$ on $(0,\infty)$ be the limiting pdf of the energy buffer content, then $g(x)$ must satisfy the following integral equations\vspace{-0.1cm}
\begin{numcases}{g(x)\!=\!\! \label{eq:g_integral_eqn_infinite}}
\hspace{-0.1cm}\int\limits_{u=0}^x f(x-u) g(u) \dd u + \int\limits_{u=M}^{M+x} f(x-u+M) g(u) \dd u, & \nonumber \\ \vspace{-0.2cm}
\hspace{4.4cm}0\leq x< M & \hspace{-1cm}\label{eq:Integral_eqn_infinite_parta}\\\vspace{0.1cm}
\hspace{-0.1cm}\int\limits_{u=0}^M f(x-u)g(u) \dd u + \int\limits_{u=M}^{M+x} f(x-u+M) g(u) \dd u,  & \nonumber \\
\hspace{4.4cm} x\geq M  & \hspace{-1cm}\label{eq:Integral_eqn_infinite_partb}\end{numcases}
\end{theorem}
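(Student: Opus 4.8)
The plan is to exploit the stationarity established in Theorem~\ref{theo:stationary_dist_infinite}. Since the limiting distribution $\pi$ is invariant and, on $(0,\infty)$, absolutely continuous with density $g$, the density must be a fixed point of the one-step transition operator of the chain: under stationarity $B(i)$ and $B(i+1)$ share the density $g$. First I would write the invariance (global balance) equation
\begin{equation}
g(x) = \int_{u=0}^{\infty} p(u,x)\, g(u)\, \dd u, \nonumber
\end{equation}
where $p(u,x)$ denotes the one-step transition density from state $u$ to state $x$, and then identify $p(u,x)$ explicitly from the storage recursion~(\ref{eq:general_storage_equation}) specialized to $K=\infty$.

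The next step is to determine $p(u,x)$ by partitioning the state space according to the on--off rule in~(\ref{eq:Pul_policy}). Given $B(i)=u$, the recursion yields $B(i+1)=u+X(i)$ when $u\leq M$ (the node stays silent) and $B(i+1)=u-M+X(i)$ when $u>M$ (the node transmits). Because $X(i)$ has density $f$ supported on $[0,\infty)$, conditioning on $B(i)=u$ gives $p(u,x)=f(x-u)$ for $u\leq M$ and $p(u,x)=f(x-u+M)$ for $u>M$, with the understanding that the density vanishes whenever the argument of $f$ is negative. This non-negativity constraint is exactly what fixes the integration limits: in the silent branch $x-u\geq 0$ forces $u\leq x$, whereas in the transmitting branch $x-u+M\geq 0$ forces $u\leq x+M$.

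Substituting $p(u,x)$ into the balance equation and splitting the domain of integration into the silent region $\{u\leq M\}$ and the transmitting region $\{u>M\}$, the silent part runs over $u\in(0,\min(x,M))$ and the transmitting part over $u\in(M,x+M)$. A case distinction on the value of $\min(x,M)$ then produces the two branches of the claim: for $0\leq x<M$ the silent integral has upper limit $x$, giving~(\ref{eq:Integral_eqn_infinite_parta}), while for $x\geq M$ it has upper limit $M$, giving~(\ref{eq:Integral_eqn_infinite_partb}); the transmitting integral is identical in both cases.

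I expect the main obstacle to be not the bookkeeping above but the rigorous passage from the measure-level invariance relation $\pi=\pi P$ to a \emph{pointwise} density identity: one must ensure that $\pi$ carries no singular component or atom (for instance at the left edge of the support) and that $g$ may legitimately be treated as a genuine density inside the integral. This is precisely where Theorem~\ref{theo:stationary_dist_infinite} is indispensable, since its absolute-continuity assertion on $(0,\infty)$ licenses writing the transition operator in kernel form and identifying $g$ with the invariant density; once that is granted, the integral equations follow from the limit-tracking described above.
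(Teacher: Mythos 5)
Your proposal is correct and follows essentially the same route as the paper: the paper likewise conditions on $B(i)=u$, reads off the transition kernel $f(x-u)$ for $u\leq M$ and $f(x-u+M)$ for $u>M$ from the storage recursion, and fixes the integration limits by requiring a non-negative harvested energy, which yields the two branches via the $\min(x,M)$ case split. Your added remark on justifying the pointwise density identity from the measure-level invariance $\pi=\pi P$ via the absolute continuity in Theorem~\ref{theo:stationary_dist_infinite} is a sound refinement of a step the paper leaves implicit, not a different argument.
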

\begin{proof}
To understand the integral equation in (\ref{eq:g_integral_eqn_infinite}), one may set $B(i)=u$ and $B(i+1)=x$, then (\ref{eq:general_storage_equation}) reads\vspace{-0.1cm}
\begin{equation}
x=\begin{cases} u+X(i), & u\leq M \\ u-M+X(i), & u>M \end{cases}.\vspace{-0.1cm}
\label{eq:Storage_eq_Infinite_u_x}
\end{equation}
Thus, for $u\leq M$, $g(x|u\leq M)=f(x-u)$ which is non-zero only for a non-negative amount of harvested energy, i.e., for $u\leq x$. Hence, in the range $u\leq M$, the upper limit on $u$ in the first integral of (\ref{eq:Integral_eqn_infinite_parta}) and (\ref{eq:Integral_eqn_infinite_partb}) is given by $u=\min(x,M)$. The second integral in (\ref{eq:Integral_eqn_infinite_parta}) and (\ref{eq:Integral_eqn_infinite_partb}) corresponds to the range $u>M$, where from (\ref{eq:Storage_eq_Infinite_u_x}), $g(x|u> M)=f(x-u+M)$ which is non-zero only for a non-negative amount of harvested energy, i.e., for $u\leq M+x$. This completes the proof. %These considerations lead to (\ref{eq:Integral_eqn_infinite}).% \vspace{-0.3cm}%This explains the upper limit in the integral of the second term of (\ref{eq:Integral_eqn_infinite}).  and 
\end{proof}\vspace{-0.1cm}
Next, we consider the case when the DL channel is Rayleigh block fading and provide the limiting distribution of the energy buffer content in the following corollary.
\begin{corollary} \normalfont
Consider the storage process in (\ref{eq:general_storage_equation}) with infinite buffer size and $M\!>\!\bar{X}$. If the EH process is exponentially distributed with pdf $f(x)\!=\!\lambda \e^{-\lambda x}$, where $\lambda\!\!=\!\!\frac{1}{\bar{X}}$  and $\delta\!=\!\lambda M\!=\!\frac{M}{\bar{X}}$, then the limiting pdf of the energy buffer content is given by
\begin{equation} g(x)=
\begin{cases}
\frac{1}{M}\left(1-\e^{px}\right), & 0<x\leq M\\
\frac{-p}{M(\lambda+p)}\e^{px}=\frac{-p}{\delta\e^{pM}}\e^{px}, & x>M
\end{cases}
\label{eq:g_x_infinite_buffer}
\end{equation}
where $p\!<\!0$ satisfies $\lambda\e^{pM}\!=\!\lambda+p$ and is given by $p\!=\frac{-\delta-W_{0}(-\delta\e^{-\delta})}{M}$. Here, $W_0(\cdot)$ is the Lambert W function of order zero. Furthermore, the transmission probability is given by $\mathbb{P}(\Pul(i)=M)=\frac{1}{\delta}$.
\label{theo:stationary_dist_infinite_exp}
\end{corollary}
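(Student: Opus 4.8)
The plan is to insert the exponential density $f(x)=\lambda\e^{-\lambda x}$ into the two-part integral equation \eqref{eq:g_integral_eqn_infinite} and exploit the defining property $f'(x)=-\lambda f(x)$ to convert the integral equation into differential equations. Concretely, I would differentiate each branch with respect to $x$ using the Leibniz rule. In the region $0\le x<M$, the boundary term from the first integral and the self-terms proportional to $\lambda$ cancel, leaving the functional differential equation $g'(x)=\lambda\, g(x+M)$. In the region $x\ge M$, the upper limit of the first integral becomes the constant $M$, so no boundary term arises there and one obtains $g'(x)=\lambda\, g(x+M)-\lambda\, g(x)$. Evaluating \eqref{eq:g_integral_eqn_infinite} at $x=0$ shows both integrals vanish over zero-length intervals, which furnishes the boundary condition $g(0)=0$.

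Next I would solve the tail region $x\ge M$ first, since its equation is self-contained under an exponential ansatz. Substituting $g(x)=C\e^{px}$ into $g'(x)=\lambda g(x+M)-\lambda g(x)$ reduces it to the characteristic equation $\lambda\e^{pM}=\lambda+p$. Integrability of $g$ over $(M,\infty)$ rules out the trivial root $p=0$ and forces the unique negative root; rewriting the characteristic equation as $w\e^{w}=-\delta\e^{-\delta}$ with $w=-(\lambda+p)M$ and selecting the principal branch $W_0$ (the branch $W_{-1}$ recovers the spurious $p=0$) yields the stated closed form $p=\bigl(-\delta-W_0(-\delta\e^{-\delta})\bigr)/M$. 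With the tail known, the equation on $0\le x<M$ becomes directly integrable, since its right-hand side $\lambda g(x+M)$ involves only tail values $C\e^{p(x+M)}$; integrating from $0$ and imposing $g(0)=0$ gives $g(x)=\tfrac{\lambda C\e^{pM}}{p}(\e^{px}-1)$ on $(0,M)$.

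Two constants must then be pinned down. I would first verify that matching the two branches at $x=M$ imposes no new condition: continuity there is algebraically equivalent to the characteristic equation already satisfied by $p$, so $C$ remains free at this stage. The single remaining degree of freedom $C$ is fixed by the normalization $\int_0^\infty g(x)\,\dd x=1$; carrying out the two elementary integrals and simplifying with $\e^{pM}=1+p/\lambda$ gives $C=-p/(\delta\e^{pM})=-p/(M(\lambda+p))$, which simultaneously collapses the prefactor on $(0,M)$ to $1/M$ and reproduces \eqref{eq:g_x_infinite_buffer}. Finally, the transmission probability is $\mathbb{P}(\Pul(i)=M)=\mathbb{P}(B(i)>M)=\int_M^\infty g(x)\,\dd x$, which evaluates to $1/\delta$ after using the characteristic relation; as a sanity check this also follows immediately from steady-state energy balance, since with an infinite buffer no energy is lost, so the mean harvested power $\bar{X}$ must equal the mean consumed power $M\,\mathbb{P}(B>M)$.

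The main obstacle I anticipate is conceptual rather than computational: the reduced equations are \emph{advanced} functional differential equations in $g(x+M)$, not ordinary ODEs, so standard existence/uniqueness intuition does not apply directly. The resolution is the observation that the two regions decouple in one direction—the tail equation is self-contained under the exponential ansatz, and the equation on $(0,M)$ is driven purely by already-determined tail values—so the problem unwinds into a characteristic equation plus a single quadrature. Care is also needed to justify that the selected root $p<0$ yields a genuine probability density (nonnegative and integrable) and that it is the only admissible exponential mode.
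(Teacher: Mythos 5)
Your proposal is correct, and it arrives at every intermediate object of the paper's proof --- the two-region split at $x=M$, the characteristic equation $\lambda\e^{pM}=\lambda+p$, the choice of the $W_0$ branch (with $W_{-1}$ returning the spurious root $p=0$ when $\delta>1$), the normalization constant, and $\mathbb{P}(\Pul(i)=M)=1/\delta$ --- but by a genuinely different solution technique. The paper postulates $g_2(x)=k\e^{px}$ directly inside the integral equation (\ref{eq:integral_eqn_INF_battery_exp_input_partb}), which forces two simultaneous conditions, $\lambda\e^{pM}=\lambda+p$ and $\int_0^M\e^{\lambda u}g_1(u)\dd u=k/\lambda$; it then solves (\ref{eq:integral_eqn_INF_battery_exp_input_parta}) as a Volterra equation of the second kind via the resolvent formula of \cite[eq.~2.2.1]{polyanin2008handbook}, and verifies the second condition a posteriori. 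You instead differentiate (\ref{eq:g_integral_eqn_infinite}), exploiting $f'=-\lambda f$, to obtain the advanced functional differential equations $g'(x)=\lambda g(x+M)$ on $[0,M)$ and $g'(x)=\lambda g(x+M)-\lambda g(x)$ on $[M,\infty)$, solve the tail through its characteristic equation, and recover the first branch by a single quadrature with $g(0)=0$. What your route buys: no handbook resolvent is needed, and the paper's a posteriori integral identity is replaced by your continuity check at $x=M$, which, as you observe, collapses to the characteristic equation; the two conditions are in fact equivalent, since the region-1 integral equation evaluated at $x=M^-$ coincides with the right-hand side of the region-2 equation at $x=M$, so continuity of $g$ at $M$ is exactly the condition $\lambda\int_0^M\e^{\lambda u}g_1(u)\dd u=C$ that the paper verifies. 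What it costs: differentiation enlarges the solution set, so for completeness you should state explicitly that the two functional differential equations together with $g(0)=0$ and continuity at $M$ imply the original integral equations (e.g., the difference $D$ between $g$ and the right-hand side of (\ref{eq:g_integral_eqn_infinite}) satisfies $D'=-\lambda D$ on each region and vanishes at the left endpoint, hence vanishes identically). Both proofs lean equally on the uniqueness guaranteed by Theorem \ref{theo:stationary_dist_infinite} to dismiss the remaining (complex) roots of the characteristic equation and legitimize the single real mode $p<0$; your energy-balance sanity check $\bar{X}=M\,\mathbb{P}(B(i)>M)$ for the $1/\delta$ claim is a nice addition the paper does not make.
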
 %\vspace{-0.cm}
\begin{proof} The proof is provided in Appendix \ref{app:stationary_dist_infinite_exp}. \end{proof}\vspace{-0.2cm}
%%%%%%%%%%%%%%%%%%%%%%%%%%%%%%%%%%%%%%%%%%%%%%%%%%%%%%%%%%%%%%%%%%%%%%%%%%%%%%%%%%%%%%%%%%%%%%%%%%%%%%%%%%%%%%%%%%
\section{Finite-Capacity Energy Buffer}
\label{s:Finite_buffer}
In this section, we first provide the integral equation of the stationary distribution of the storage process $\{B(i)\}$ for a finite-size energy buffer and a general i.i.d. EH process. Then, the distribution is provided in closed form for a Rayleigh fading DL channel.%Then, for an i.i.d. exponential EH process, we provide the distribution of $\{B(i)\}$ in exact closed form in Corollary \ref{theo:limiting_dist_Finite_exponential_exact} as well as a tight approximation of it in Proposition \ref{prop:limiting_dist_Finite_exponential_approx}.% exponential-form.
\begin{theorem}  \normalfont
For a finite buffer size $K$ and an EH process $\{X(i)\}$ with a distribution that has an infinite positive tail, the storage process in (\ref{eq:general_storage_equation}) is a stationary and ergodic process which possesses a unique stationary distribution $\pi$ that has a density on $(0,K)$ and an atom at $K$. Furthermore, the process converges in total variation to the limiting distribution $\pi$ from any initial distribution. 
\label{theo:limiting_dist_Finite}
\end{theorem}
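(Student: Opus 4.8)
The plan is to treat $\{B(i)\}$ as a Markov chain on the \emph{compact} state space $S=[0,K]$ and to exploit the saturation atom created by the $\min(\cdot,K)$ operation to establish a one-step Doeblin minorization, from which existence, uniqueness, and total-variation convergence all follow at once. First I would read off the one-step transition kernel from (\ref{eq:general_storage_equation}): starting from state $u$, the event $\{B(i+1)=K\}$ occurs exactly when $X(i)\geq K-u+M\mathds{1}_{u>M}$, and since $u\in[0,K]$ this threshold always lies in $[0,K]$, so its probability equals $\bar{F}(K-u+M\mathds{1}_{u>M})\geq\bar{F}(K)$. Because $\{X(i)\}$ has an infinite positive tail, $\epsilon:=\bar{F}(K)>0$, and hence $P(u,\cdot)\geq\epsilon\,\delta_{K}(\cdot)$ holds uniformly in $u\in S$, where $\delta_K$ is the point mass at $K$. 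This is precisely Doeblin's condition, with the whole space $S$ serving as a small set ($m=1$) and minorizing measure $\delta_K$.

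Given this minorization, I would invoke the classical Doeblin/uniform ergodicity theorem for general state-space chains (in the Meyn--Tweedie framework): the chain is $\delta_K$-irreducible, aperiodic (there is positive probability of remaining at $K$ in one step, namely $\bar{F}(M)>0$ when $K>M$ and probability one otherwise), and positive Harris recurrent. Consequently there is a unique stationary distribution $\pi$, the process started from $\pi$ is stationary and ergodic, and $\|P^{n}(x,\cdot)-\pi\|_{\mathrm{TV}}\leq R\rho^{n}$ for some $R<\infty$ and $\rho<1$ uniformly in $x$, which yields convergence in total variation from any initial distribution. This covers every assertion of the theorem except the structural form of $\pi$.

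To identify the form of $\pi$ I would use the fixed-point identity $\pi=\pi P$ directly. For any Borel $A\subseteq(0,K)$ the kernel has only a continuous component, $P(u,A)=\int_{A}\bigl[\mathds{1}_{u\leq M}f(x-u)+\mathds{1}_{u>M}f(x-u+M)\bigr]\dd x$ subject to the support constraints $u\leq x$ and $u-M\leq x$; integrating against $\pi(\dd u)$ and exchanging the order of integration shows that $\pi$ restricted to $(0,K)$ is absolutely continuous with respect to Lebesgue measure, i.e. it admits a density $g$. The only atomic mass is produced by saturation, so $\pi(\{K\})=\int_{S}\bar{F}(K-u+M\mathds{1}_{u>M})\,\pi(\dd u)\geq\bar{F}(K)>0$, giving a genuine atom at $K$; since $X$ has a density, $P(u,\{x_0\})=0$ for every $u$ and every $x_0\in(0,K)$, so $\pi=\pi P$ cannot manufacture any interior atom.

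The main obstacle I anticipate is not the minorization itself, which is immediate once the saturation event is written down, but the careful justification of the structural claim: establishing absolute continuity of $\pi$ on $(0,K)$ requires the Fubini exchange above to be valid and the continuous component of the kernel to yield an integrable density, which is where the assumption that $X$ possesses a density $f$ enters. The infinite-tail hypothesis is essential precisely because it guarantees $\bar{F}(K)>0$ for the finite cap $K$, ensuring that both the Doeblin constant $\epsilon$ and the atomic mass $\pi(\{K\})$ are strictly positive; with a bounded support $X\in[0,x_{\max}]$ and $x_{\max}<K$ the atom would vanish and the argument would break down.
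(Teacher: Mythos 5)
Your proposal is correct, and it supplies in full the argument the paper itself omits: the paper's ``proof'' is only a pointer to \cite[Theorem 4]{Morsi_ICC2014}, and that proof rests on precisely the observation you isolate, namely that the saturation state $K$ is reached in one step from every $u\in[0,K]$ with probability $\bar{F}\left(K-u+M\mathds{1}_{u>M}\right)\geq\bar{F}(K)>0$, so that $\{K\}$ serves as an accessible atom/regeneration point for the chain. Your packaging of this as a whole-space one-step Doeblin minorization $P(u,\cdot)\geq\bar{F}(K)\,\delta_K(\cdot)$ is the same key idea and in fact buys slightly more than the theorem asserts: uniform geometric convergence in total variation, rather than mere Harris ergodicity. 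Your structural argument is also sound --- Tonelli applied to $\pi=\pi P$ on Borel subsets of $(0,K)$ yields the density $g(x)=\int_S f\left(x-u+M\mathds{1}_{u>M}\right)\pi(\dd u)$, and $P(u,\{x_0\})=0$ for interior $x_0$ rules out interior atoms, while $\pi(\{K\})\geq\bar{F}(K)>0$ makes the atom genuine. The only cosmetic omission is the endpoint $0$: the theorem's decomposition implicitly requires $\pi(\{0\})=0$, which follows from the same reasoning, since $B(i+1)=0$ would require $X(i)=M\mathds{1}_{u>M}-u$ exactly, a null event because $X$ has a density (and impossible for $u>M$); stating this would make the density-plus-atom claim airtight.
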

\begin{proof} The proof is identical to that of \cite[Theorem 4]{Morsi_ICC2014}. \end{proof}
%\begin{proof} The proof is provided in Appendix \ref{app:limiting_dist_Finite}. \end{proof}
\begin{theorem}  \normalfont
Consider the storage process $\{B(i)\}$ in (\ref{eq:general_storage_equation}), with a finite buffer size $K>2M$. Let $g(x)$ be the limiting pdf of the energy buffer content on $(0,K)$ and $\pi(K)$ be the limiting probability of a full buffer (i.e., the atom at $K$). If $f(x)$ and $\bar{F}(x)$ are respectively the pdf and the ccdf of $\{X(i)\}$, then, $g(x)$ and $\pi(K)$ must jointly satisfy\vspace{-0.1cm}
\begin{numcases}{g(x)\!=\!\! \label{eq:g_integral_eqn_finite}}
\hspace{-0.1cm}\int\limits_{u=0}^{x}f(x-u) g(u) \dd u + \int\limits_{u=M}^{M+x} f(x-u+M) g(u)\dd u, &  \nonumber \\ \vspace{-0.2cm}
\hspace{5cm}0\leq x<M & \label{eq:parta}\\
\hspace{-0.1cm}\int\limits_{u=0}^{M}f(x-u) g(u) \dd u + \int\limits_{u=M}^{M+x} f(x-u+M) g(u)\dd u, &  \nonumber \\ \vspace{-0.2cm}
\hspace{4.4cm}M\leq x<K-M & \label{eq:partb}\\
\hspace{-0.1cm}\int\limits_{u=0}^{M}f(x-u) g(u) \dd u + \int\limits_{u=M}^{K} f(x-u+M) g(u)\dd u & \nonumber \\
+\pi(K) f(x-K+M),  \hspace{1.1cm} K-M \leq x<K & \label{eq:partc}
\end{numcases}
\begin{equation}\vspace{-0.2cm}
\pi(K)\!=\!\frac{\left[\int\limits_{u=0}^{M}\bar{F}(K-u) g(u) \dd u + \!\!\int\limits_{u=M}^{K} \bar{F}(K-u+M) g(u)\dd u \right]}{1-\bar{F}(M)},\vspace{-0.1cm}
\label{eq:partd}
\end{equation}
and the unit area condition\vspace{-0.1cm}
\begin{equation}
\int\limits_{0}^{K} g(u) \dd u + \pi(K)=1.\vspace{-0.1cm}
\label{eq:unit_area_eq}
\end{equation}
\end{theorem}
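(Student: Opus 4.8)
The plan is to mirror the argument used for the infinite-buffer case behind (\ref{eq:g_integral_eqn_infinite}), adding the two features that finiteness introduces: the truncation $\min(\cdot,K)$ in (\ref{eq:general_storage_equation}) creates an atom at $K$, and the bounded state space $[0,K]$ caps the upper limit of the second integral once $x$ approaches $K$. Since Theorem~\ref{theo:limiting_dist_Finite} already guarantees that a unique limiting distribution exists and decomposes into a density $g(x)$ on $(0,K)$ together with an atom $\pi(K)$ at $K$, the task is purely to characterize this distribution via the global-balance (stationarity) condition: the pair $(g,\pi(K))$ must be invariant under one step of the chain.

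I would first treat the density part, i.e. target values $x<K$, for which the $\min$ is inactive. Setting $B(i)=u$ and $B(i+1)=x$, the transition reads $x=u+X(i)$ when $u\leq M$ and $x=u-M+X(i)$ when $u>M$, so the inflow densities are $f(x-u)$ (nonzero for $u\leq x$) and $f(x-u+M)$ (nonzero for $u\leq x+M$), exactly as in the infinite case. The new ingredient is inflow out of the atom: because $K>M$, from $B(i)=K$ the chain moves to $K-M+X(i)$, which lands at $x<K$ only when $X(i)=x-K+M\geq 0$, i.e. for $x\geq K-M$, contributing the term $\pi(K)f(x-K+M)$. Intersecting the integration ranges with the admissible support then yields the three regimes: for $0\leq x<M$ the first integral ends at $u=x$; for $M\leq x<K-M$ it ends at $u=M$ with no atom term and the second integral still ending at $u=x+M$; and for $K-M\leq x<K$ the atom term is present and the second integral is truncated at the buffer edge $u=K$ (since $x+M\geq K$ there). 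This is precisely where the hypothesis $K>2M$ enters: it keeps the three regimes (\ref{eq:parta})--(\ref{eq:partc}) nonempty and correctly ordered.

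For the atom I would write the balance equation for the event $B(i+1)=K$, i.e. the $\min$ is active. This occurs from $u\leq M$ when $X(i)\geq K-u$ (probability $\bar{F}(K-u)$), from $M<u\leq K$ when $X(i)\geq K-u+M$ (probability $\bar{F}(K-u+M)$), and from the atom itself when $X(i)\geq M$ (probability $\bar{F}(M)$). Collecting the density-driven inflow and the self-loop $\pi(K)\bar{F}(M)$ gives $\pi(K)=\int_{0}^{M}\bar{F}(K-u)g(u)\,\dd u+\int_{M}^{K}\bar{F}(K-u+M)g(u)\,\dd u+\pi(K)\bar{F}(M)$, and solving for $\pi(K)$ produces (\ref{eq:partd}). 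Equation (\ref{eq:unit_area_eq}) is then just the normalization of the limiting distribution from Theorem~\ref{theo:limiting_dist_Finite}.

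I expect the main difficulty to be bookkeeping rather than anything deep: for each target $x$ one must cleanly separate the inflow that keeps the state strictly below $K$ (feeding the density $g$) from the inflow that saturates at $K$ (feeding the atom $\pi(K)$), and correctly truncate the ``$u>M$'' integral at $u=K$ while tracking when the atom term switches on. The assumption $K>2M$ is exactly what prevents the boundary regime from overlapping the $x<M$ regime, so that this case analysis remains unambiguous.
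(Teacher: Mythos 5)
Your proposal is correct and follows essentially the same route as the paper: write the one-step balance condition with $B(i)=u$, $B(i+1)=x$, split on $u\lessgtr M$, truncate the second integral at the buffer edge and add the atom's inflow $\pi(K)f(x-K+M)$ for $K-M\leq x<K$, and obtain (\ref{eq:partd}) from the saturation event via the ccdf. Your treatment is in fact slightly more explicit than the paper's on one point: you spell out that the atom's self-loop $\pi(K)\bar{F}(M)$ is what produces the denominator $1-\bar{F}(M)$ after solving, whereas the paper leaves this implicit in the phrase ``we sweep over $0<u\leq K$.''
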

\begin{proof}
The integral equations in (\ref{eq:g_integral_eqn_finite}) and (\ref{eq:partd}) can be derived by adopting the same approach used for the proof of (\ref{eq:g_integral_eqn_infinite}). In particular, if we set $B(i)=u$ and $B(i+1)=x$, then (\ref{eq:general_storage_equation}) reads
\begin{equation}
x=
\begin{cases}
u+X(i) & u\leq M \quad\&\quad u+X(i)<K\\
u-M+X(i) & u>M \quad\&\quad u-M+X(i)<K\\
K & {\rm otherwise.}%\left(u>M \quad\&\quad u-M+X(i)>K\right) || \left(u\leq M \quad\&\quad X(i)>K\right) 
\end{cases}%\vspace{-0.1cm}
\label{eq:Our_storage_equation_cases}
\end{equation}
Consider first the continuous part of the distribution, i.e., $g(x)$ defined on $0\leq x <K$ given in (\ref{eq:g_integral_eqn_finite}). Eqs. (\ref{eq:parta}) and (\ref{eq:partb}) are identical to (\ref{eq:Integral_eqn_infinite_parta}) and (\ref{eq:Integral_eqn_infinite_partb}), respectively. However, we need to further ensure that the upper limit on $u$ given by $M+x$ (for a non-negative harvested energy) is in the domain of $g(u)$. That is, in (\ref{eq:parta}), $\max_x(M+x)<K$ must hold, i.e., $K>2M$ and in (\ref{eq:partb}), $M+x<K$ must hold. Hence, (\ref{eq:partb}) is valid only for $x<K-M$ (with strict inequality). For the rest of the range of $x$ in (\ref{eq:partc}), i.e., $K-M\leq x<K$, the upper limit  $M+x$ on $u$ is larger than or equal to $K$. Thus, the whole range of $0< u\leq K$ contributes to $g(x)$. The range $0< u<K$ is covered by the first two integrals in (\ref{eq:partc}), and $u=K$ is considered in the last term. Finally, at $x=K$, the probability that the buffer is full, $\pi(K)$, in (\ref{eq:partd}) is obtained similar to (\ref{eq:partc}). However, rather than considering the pdf at the amount of harvested energy $x-u+M\mathds{1}_{u>M}$ as in (\ref{eq:partc}), we consider the ccdf $\bar{F}(x-u+M\mathds{1}_{u>M})$ instead (at $x\!=\!K$). This is because the full buffer level $K$ is attained when the amount of harvested energy is larger than or equal to $K-u+M\mathds{1}_{u>M}$, where we sweep over $0\!<\!u\!\leq\!K$ to obtain (\ref{eq:partd}).% This completes the proof.
\end{proof}
Next, we consider the case when the DL channel is Rayleigh block fading. We provide the exact limiting distribution of the energy buffer content in Corollary \ref{theo:limiting_dist_Finite_exponential_exact} and an approximate distribution in Proposition \ref{prop:limiting_dist_Finite_exponential_approx}.
\begin{corollary}  \normalfont
Consider the storage process $\{B(i)\}$ in (\ref{eq:general_storage_equation}) with a finite buffer size $K=lM$, with $l\in\mathbb{Z}$ and $l\geq 2$, and an i.i.d. exponentially distributed EH process $\{X(i)\}$ with pdf $f(x)=\lambda\e^{-\lambda x}$, where $\lambda=\frac{1}{\bar{X}}$ and $\delta=\lambda M$, then the limiting pdf $g(x)$ of the energy buffer content and the full buffer probability $\pi(K)$ satisfy
 \small
\begin{numcases}{g(x)\!=\!\!\label{eq:g_x_finite_exact}}
\begin{aligned}
&\pi(K)\lambda\e^{-\delta(1-l)} \sum\limits_{q=0}^{l-2}\frac{\e^{-\delta q}}{q!} \Big[\left(\delta (q+1)-\lambda K\right)^{q}\\
&-\e^{-\lambda x}\left(\delta (q+1)+\lambda(x-K)\right)^{q}\Big],\end{aligned} & \nonumber\\
\hspace{4.9cm} 0\leq x<M,\label{eq:g_x_finite_exact_parta} \hspace{-1cm}& \\
\begin{aligned}
&\pi(K)\lambda\e^{-\lambda(x-K)} \Bigg[1+\sum\limits_{q=1}^{n}\frac{\e^{-\delta q}}{(q-1)!} \left(\delta q+\lambda(x-K)\right)^{q-1}\\
&+\left(\frac{\lambda(x-K)}{q}+\delta-1\right)\Bigg],\end{aligned} &\nonumber\\%
\hspace{2cm} \begin{aligned}&K-(n+1)M\leq x<K-nM,\\ &\hspace{2.1cm} n=0,\ldots,l-2, \end{aligned}\label{eq:g_x_finite_exact_partb} \hspace{-0.2cm}& 
\end{numcases}
\normalsize \vspace{-0.1cm}
and \vspace{-0.1cm}
 \small
\begin{equation}
\begin{aligned}
\pi(K)\!=\!\!\Bigg\{&1+\!\sum\limits_{n=0}^{l-2}\e^{n\delta}\sum\limits_{q=0}^{n}\!\frac{\left(\delta\e^{-\delta}\right)^q}{q!}\!\left(\!\e^{\delta}\left(q\!-\!(n\!+\!1)\right)^q\!-\!(q\!-\!n)^q\right)\! \\
&+\e^{-\delta(1-l)}\sum\limits_{q=0}^{l-2}\frac{\e^{-\delta q}}{q!}\Big[\e^{\delta(q+1-l)}\big(\Gamma\left(q+1,\delta(q+2-l)\right)\\
&-\Gamma\left(q+1,\delta(q+1-l)\right)\big)+\delta\left(\delta(q+1-l)\right)^q\Big]\Bigg\}^{-1}.
\end{aligned}
\label{eq:Pi_K_delta}
\end{equation}
\normalsize
\label{theo:limiting_dist_Finite_exponential_exact}  %i.e., $l=\lfloor\frac{K}{M}\rfloor$
\end{corollary}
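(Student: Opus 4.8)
The plan is to exploit the memorylessness of the exponential EH pdf to turn the coupled integral equations of the preceding theorem into a chain of first-order (delay) differential equations, solve this chain recursively from the top of the buffer downward using the commensurability $K=lM$, and finally fix the atom $\pi(K)$ by normalization. First I would substitute $f(x)=\lambda\e^{-\lambda x}$ and $\bar F(x)=\e^{-\lambda x}$ into \eqref{eq:g_integral_eqn_finite} and \eqref{eq:partd}, multiply each piece by the integrating factor $\e^{\lambda x}$ so that every integrand becomes $\lambda\e^{\lambda u}g(u)$ (up to the constant $\e^{-\lambda M}$ on the shifted integral), and differentiate in $x$. Using $\frac{\dd}{\dd x}\int_{M}^{M+x}\e^{\lambda u}g(u)\,\dd u=\e^{\lambda(M+x)}g(x+M)$, this collapses \eqref{eq:parta}--\eqref{eq:partc} into $g'(x)=\lambda g(x+M)$ on $[0,M)$, into $g'(x)+\lambda g(x)=\lambda g(x+M)$ on $[M,K-M)$, and, on $[K-M,K)$, into the observation that all integration limits have become constants, so that $\e^{\lambda x}g(x)$ is constant and $g(x)=C\e^{-\lambda x}$. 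Thus the topmost interval is a pure decaying exponential, and every lower interval is driven by the solution one step higher, shifted up by $M$.

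Because $K=lM$, I would then partition $[0,K)$ into the $l$ equal subintervals $I_n=[K-(n+1)M,\,K-nM)$ for $n=0,\ldots,l-1$ (with $I_{l-1}=[0,M)$) and solve top-down. Starting from $g(x)=C\e^{-\lambda x}$ on $I_0$, each subsequent equation is a first-order linear ODE whose forcing term $\lambda g(x+M)$ is already known and has the form (polynomial)$\,\times\e^{-\lambda x}$; the integrating factor $\e^{\lambda x}$ reduces the right-hand side to a polynomial, whose antiderivative raises the polynomial degree by one. This is precisely the mechanism that generates the powers $(\cdot)^{q}$ and the finite sums $\sum_q$ appearing in \eqref{eq:g_x_finite_exact_parta}--\eqref{eq:g_x_finite_exact_partb}. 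I would determine the integration constant on each interval by imposing continuity of $g$ at the knot $x=K-nM$ (the integral representation \eqref{eq:g_integral_eqn_finite} yields a continuous function of $x$), and recover the single constant lost to differentiation on $I_0$ by re-imposing the undifferentiated relation \eqref{eq:partc}, equivalently the atom equation \eqref{eq:partd}, at one point; this expresses $C$ in terms of $\pi(K)$ and leaves $g$ determined up to the overall factor $\pi(K)$.

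Finally I would pin down $\pi(K)$ from the unit-area condition \eqref{eq:unit_area_eq}, $\int_0^K g(u)\,\dd u+\pi(K)=1$. Integrating the (polynomial)$\,\times\e^{-\lambda x}$ pieces over the finite ranges $I_n$ produces lower incomplete Gamma functions $\Gamma(\cdot,\cdot)$, which assemble into \eqref{eq:Pi_K_delta}. The hard part will be the inductive bookkeeping in the downward recursion: proving that the coefficients produced by repeated integrating-factor integrations collapse into exactly the binomial/factorial sums claimed in \eqref{eq:g_x_finite_exact}, and then evaluating the resulting finite family of incomplete-Gamma integrals to recover the closed form \eqref{eq:Pi_K_delta}. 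By contrast, the reduction to ODEs and the continuity matching are routine; the combinatorial simplification of the recursion and the normalization integral are where the genuine effort lies.
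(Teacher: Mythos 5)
Your reduction to a chain of delay differential equations is sound as far as it goes: multiplying (\ref{eq:parta})--(\ref{eq:partc}) by $\e^{\lambda x}$ and differentiating does give $g'(x)=\lambda g(x+M)$ on $[0,M)$, $g'(x)+\lambda g(x)=\lambda g(x+M)$ on $[M,K-M)$, and a pure exponential $C\e^{-\lambda x}$ on $[K-M,K)$, and your top-down solution in stripes of width $M$ is essentially the paper's own route (the paper solves the same stripes backwards recursively, reusing \cite{Morsi_ICC2014} for $M\leq x<K$, and treats $[0,M)$ as a Volterra equation of the second kind before normalizing); your determination of $C$ from (\ref{eq:partd}) is also correct and yields $C=\lambda\e^{\lambda K}\pi(K)$, i.e., $g_0(x)=\pi(K)\lambda\e^{-\lambda(x-K)}$.

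The genuine gap is your matching condition: $g$ is \emph{not} continuous at the knot $x=K-M$, so the parenthetical claim that the integral representation (\ref{eq:g_integral_eqn_finite}) ``yields a continuous function of $x$'' fails exactly where it matters. At $x=K-M$ the representation switches from (\ref{eq:partb}) to (\ref{eq:partc}); the two integral contributions agree in the limit (since $M+x\to K$), but (\ref{eq:partc}) carries the additional atom term $\pi(K)\,f(x-K+M)$, which at $x=K-M$ equals $\pi(K)f(0)=\lambda\pi(K)\neq 0$. The correct matching is therefore the jump condition $\lim_{x\uparrow K-M}g(x)=g\big((K-M)^+\big)-\lambda\pi(K)$, which the stated closed form confirms: (\ref{eq:g_x_finite_exact_partb}) gives $g\big((K-M)^-\big)=\pi(K)\lambda(\e^{\delta}-1)$ versus $g\big((K-M)^+\big)=\pi(K)\lambda\e^{\delta}$. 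If you impose continuity there as proposed, the homogeneous constant on $[K-2M,K-M)$ acquires a spurious term $\lambda\pi(K)\e^{-\lambda\left(x-(K-M)\right)}$, and since each lower stripe is driven by the one above, this error propagates through every stripe, through $g$ on $[0,M)$, and through the normalization (\ref{eq:unit_area_eq}) into $\pi(K)$, so neither (\ref{eq:g_x_finite_exact}) nor (\ref{eq:Pi_K_delta}) would be recovered. Continuity is valid at the remaining knots $x=K-nM$, $n\geq 2$ (interior to the single formula (\ref{eq:partb})) and at $x=M$, so the fix is local: replace continuity by the jump of size $\lambda\pi(K)$ at $x=K-M$ only. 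With that repair, your plan (including the deferred inductive bookkeeping and the incomplete-Gamma evaluations for (\ref{eq:Pi_K_delta})) is at the same level of completeness as the paper's own proof sketch, which likewise defers the full derivation to the journal version.
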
\vspace{-0.4cm}
\begin{proof} 
The exact derivation of the limiting distribution is lengthy so we omit it and provide it in the journal version of this paper. Here, we describe the approach used to derive (\ref{eq:g_x_finite_exact}) and (\ref{eq:Pi_K_delta}). First, observe that (\ref{eq:partb}), (\ref{eq:partc}), and (\ref{eq:partd}) are identical to \cite[(8a), (8b), and (9)]{Morsi_ICC2014}, respectively, except for the first term on the right hand side of the three equations. The solution for $g(x)$ in \cite{Morsi_ICC2014} was obtained recursively backwards in stripes of width $M$ \emph{independent} of the value of the first term in \cite[(8a), (8b), and (9)]{Morsi_ICC2014}. As a result,  in the range $M\leq x <K$, the limiting pdf $g(x)$ for the transmission policy considered in this paper in (\ref{eq:Pul_policy}) is identical to that for the transmission policy $\Pul(i)=\min(B(i),M)$ considered in \cite{Morsi_ICC2014}. That is, (\ref{eq:g_x_finite_exact_partb}) is identical to \cite[eq. (12)]{Morsi_ICC2014}. Compared to \cite{Morsi_ICC2014}, the limiting distribution of the considered policy only differs in $g(x)$, $0 \leq x <M$, and consequently in the value of the atom $\pi(K)$. Now, define the $n^{{\rm th}}$ section of $g(x)$ as \vspace{0.1cm}  %in (\ref{eq:g_x_finite_exact}) 
\begin{equation} 
g_n(x)=g(x),\quad K-(n+1)M\leq x<K-nM
\label{eq:g_n_x}
\end{equation}\vspace{0.1cm}
and use  $f(x)=\lambda\e^{-\lambda x}$ and $\bar{F}(x)=\e^{-\lambda x}$, then $g(x)$, $0\leq x<M$, is obtained from (\ref{eq:parta}) as
\small
\begin{equation}
g_{l-1}(x)\!=\!\lambda\e^{-\lambda x}\!\!\int\limits_{u=0}^{x}\!\e^{\lambda u}g_{l-1}(u) \dd u + \underbrace{\lambda\e^{-\lambda x}\!\!\int\limits_{u=M}^{M+x}\!\e^{-\lambda(M-u)} g_{l-2}(u)\dd u}_{r(x)},
\label{eq:g_l_1_integral_eq}
\end{equation}
\normalsize
where $g_{l-2}(x)$ is given by (\ref{eq:g_x_finite_exact_partb}) at $n=l-2$. The integral equation in (\ref{eq:g_l_1_integral_eq}) is a Volterra integral equation of the second kind whose solution is given in \cite[eq. 2.2.1]{polyanin2008handbook} as $g_{l-1}(x)=r(x)+\lambda\int\limits_0^x r(t) \dd t$, which reduces to (\ref{eq:g_x_finite_exact_parta}). Finally, $\pi(K)$ in (\ref{eq:Pi_K_delta}) guarantees a unit area distribution, i.e.,\vspace{0.1cm}
\begin{equation}
\int\limits_0^M g_{l-1}(x) \dd x + \sum\limits_{n=0}^{l-2}\,\int\limits_{K-(n+1)M}^{K-nM}g_n(x) \dd x + \pi(K)=1.
\label{eq:PiK_exact_Transmit_M_only_step}
\end{equation}\vspace{0.1cm}%This completes the proof. 
We note that the limiting distribution provided in Corollary \ref{theo:limiting_dist_Finite_exponential_exact} is valid only for $K=lM$, where $l\in\mathbb{Z}$ and $l\geq 2$. The solution for a general buffer size is more involved and will be provided in the journal version of this paper.
\end{proof}\vspace{-0.2cm}
%First, we note that the solution of $g(x)$ in (\ref{eq:g_x_finite_exact}) is obtained in stripes of width $M$. This is due to the upper integral limit $M+x$ in (\ref{eq:parta}), hence the width-$M$ stripes solution is in fact general for any distribution of the i.i.d. EH process \cite{Moran_book_1961}. We derived $g(x)$ by induction. In particular, we obtained $g(x)$ in the range $K-M\leq x <K$ from (\ref{eq:partb}) and (\ref{eq:partc}), after setting $f(x)=\lambda\e^{-\lambda x}$ and $\bar{F}(x)=\e^{-\lambda x}$. Then, using (\ref{eq:parta}) and (\ref{eq:partb}), we traversed back in sections of width $M$ until $x=0$. Note that the provided solution is general for any $K$ (i.e., $K$ is not necessarily an integer multiple of $M$). The exact derivation is lengthy so we omit it and provide it in the journal version of this paper. However, we note that due to the analogy between our storage model and Moran's dam model, c.f. Remark \ref{remark:equivalence_to_Morans_Model}, the solution of $g(x)$ is identical to \cite[eq. (3.6)]{Moran_book_1961} (which is also given by Prabhu in \cite[Section 2]{Prabhu_1958}). After getting $g(x)$, $\pi(K)$ is obtained by solving (\ref{eq:unit_area_eq}).\vspace{-0.2cm}
Since the exact limiting distribution of the buffer content provided in Corollary \ref{theo:limiting_dist_Finite_exponential_exact} is quite complicated, we propose an approximate distribution which will be used in Section \ref{s:BER_outage_analysis}.% in the AER and outage probability analysis in Section \ref{s:BER_outage_analysis}. 
\begin{proposition}  \normalfont
%First for notational brevity, we define the $n^{{\rm th}}$ section of $g(x)$ in (\ref{eq:g_x_finite_exact}) as $g_n(x)=g(x),\,[K-(n+1)M]^+\leq x<K-nM$. 
The limiting distribution of the storage process described in Corollary \ref{theo:limiting_dist_Finite_exponential_exact} can be approximated in the range $0 \leq x <M$ by $\tilde{g}_{l-1}(x)$, and in the range $M\leq x < K-n_c M$ by $\tilde{g}(x)$, where $n_c$ is some chosen section number such that $g_n(x)\approx \tilde{g}(x),\,\forall\, n_c\leq n\leq l-2$ as shown in Fig. \ref{fig:pdf_approximation}. For $k-n_cM \leq x < K$, $\tilde{g}_n(x)$ is given by the exact $g_n(x)$ in (\ref{eq:g_x_finite_exact_partb}) after replacing $\pi(K)$ with $\tilde{\pi}(K)$, where $\tilde{\pi}(K)$ is the approximate probability that the buffer is full which ensures a unit area of the approximate distribution. The proposed approximation is tight for $K\geq 3M$ and $n_c\geq 2$. Furthermore, the approximation error tends to zero as the buffer size tends to infinity (for $\delta=M/\bar{X}>1$).
\vspace{-0.3cm}
\begin{figure}[!h]
\centering
\scalebox{0.57}{\input{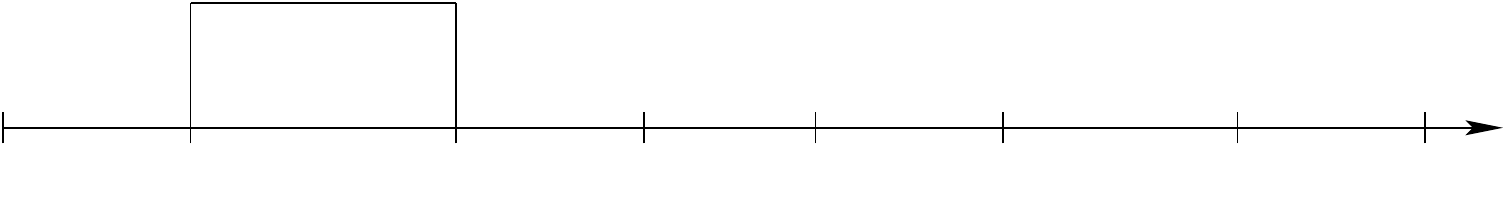_t}}\vspace{-0.3cm}
\caption{Pdf approximation.}
\label{fig:pdf_approximation}
\end{figure}\newline
The approximate pdf is given by
\small
\begin{align}
&\tilde{g}_{l-1}(x)=\begin{cases} c\frac{\lambda\e^{dM}}{d}\left[\e^{dx} -1 \right] & \delta \neq 1\\
 c\lambda x & \delta=1 \end{cases} &\hspace{0.2cm} & 0\leq x<M\label{eq:gtilde_l_1}\\
&\tilde{g}(x)=c\e^{dx}  & & M\leq x<K-n_cM\\
&\tilde{g}_n(x)=\frac{\tilde{\pi}(K)}{\pi(K)}g_n(x) &&\hspace{-1cm} \begin{aligned} K\!-\!(n\!+\!1)&M\leq x<K\!-\!nM\\ &n=0,\cdots n_c-1\end{aligned}
\end{align}
\normalsize
where $d$ and $c$ are given by \vspace{-0.2cm}
\begin{equation}
d=\frac{-\delta-W_j(-\delta\e^{-\delta})}{M},\quad j=\begin{cases} -1 & 0<\delta\leq1\\
																																		0 & \delta>1
																																\end{cases},
\label{eq:d_approx}
\end{equation} \vspace{-0.2cm}
\small
\begin{equation*}
c=\tilde{\pi}(K)\,\lambda\underbrace{\e^{\lambda K} \left[1\!+\!\sum\limits_{q=1}^{l-1}\frac{\e^{-\delta q}}{(q-1)!} \left(\delta q\!-\!\lambda K\right)^{q-1}\left(\frac{-\lambda K}{q}\!+\!\delta\!-\!1\right)\right]}_{\Sigma_1}.
\label{eq:c_approx}
\end{equation*}
\normalsize
Here, $W_j(\cdot)$ is the $j^{\text{th}}$ order Lambert W function and the approximate atom at $K$ is given by
\small
\begin{equation}
%\begin{aligned}
\tilde{\pi}(K)\!=\!\!\Bigg[\!1\!+\!\Sigma_2\!+\!\!\sum\limits_{n=0}^{n_c-1}\!\!\e^{n\delta}\!\sum\limits_{q=0}^{n}\!\frac{\left(\delta\e^{-\delta}\right)^q}{q!}\!\left(\e^{\delta}\left(q\!-\!(n\!+\!1)\right)^q\!-\!(q\!-\!n)^q\!\right)\!\Bigg]^{-1},
%\end{aligned}
\label{eq:Pi_K_approx}
\end{equation}
\normalsize
where $\Sigma_2=\begin{cases} \frac{\lambda\Sigma_1}{d}\left(\e^{d(K-n_c M)}-\delta \e^{dM}\right) & \delta\neq 1\\
\lambda\Sigma_1\left(\frac{M}{2}+K-(n_c+1) M\right) & \delta=1 \end{cases}$.
\label{prop:limiting_dist_Finite_exponential_approx}
\end{proposition}
\begin{proof} The proof is provided in Appendix \ref{app:limiting_dist_Finite_exponential_approx}.\end{proof}%  $\sum\limits_{q=1}^{0} (\cdot)=0$,
%%%%%%%%%%%%%%%%%%%%%%%%%%%%%%%%%%%%%%%%%%%%%%%%%%%%%%%%%%%%%%%%%%%%%%%%%%%%%%%%%%%%%%%%%%%%%%%%%%%%%%%%%%%%%%%%%%
\section{AER and Outage Probability Analysis}
\label{s:BER_outage_analysis}
In this section, we analyze the AER and the outage probability of the communication over the UL channel, when both the UL and the DL channels are Rayleigh faded.
With the considered on-off transmission policy in (\ref{eq:Pul_policy}), whenever the EH node transmits, its UL transmit power is constant. Hence, the AER and the channel outage probability are 
independent of the energy buffer size. On the other hand, the transmission probability depends on the buffer size. For an infinite-size buffer, the transmission probability is given by $\mathbb{P}_{\rm trans}=\mathbb{P}(\Pul(i)=M)=1$ if $\delta\leq 1$, c.f. Theorem \ref{theo:no_stationary_dist} and by $\mathbb{P}_{\rm trans}=1/\delta$ if $\delta>1$, c.f. Corollary \ref{theo:stationary_dist_infinite_exp}. For a finite-size buffer,  the transmission probability is given by
\begin{equation}
\begin{aligned}
&\mathbb{P}_{\rm trans}=\mathbb{P}(B(i)>M)\\
&=1-\int\limits_{0}^M \tilde{g}_{l-1}(x)=\begin{cases}1-\frac{c\e^{dM}}{d}(1-\delta) & \delta \neq 1\\ 1-c\frac{M}{2} & \delta=1 \end{cases},
\end{aligned}
\label{eq:Pr_transmission_Finite}
\end{equation}
where we used the approximate pdf in (\ref{eq:gtilde_l_1}) and $\lambda+d=\lambda\e^{dM}$. %Next, we analyze the AER and the outage probability for the considered on-off transmission policy.
%From the considered on-off transmission policy in (\ref{eq:Pul_policy}), whenever the battery contains energy of at least $M$, the UL transmit power is constant and equals $M$. 
\subsection{AER Analysis}
\label{ss:BER_Analysis}
For uncoded transmission, the bit or symbol error rate of many coherent modulation schemes can be expressed as $P_e(\gamma)=aQ(\sqrt{b\gamma})$ \cite{Wang_Giannakis_2003}, where $\gamma$ is the instantaneous SNR, $Q(\cdot)$ is the Gaussian Q-function, and $a$ and $b$ depend on the modulation scheme used, e.g., for binary phase shift keying (BPSK), $a\!=\!1$ and $b\!=\!2$. %signal-to-noise-ratio (SNR)
Hence, the AER is given by 
\begin{equation}
P_e=\int\limits_0^\infty a Q\left(\!\sqrt{\frac{bM\Omegaul h}{\sigma_n^2}}\right)\e^{-h} \dd h=\frac{a}{2}\left[\!1\!-\!\sqrt{\frac{b\bar{\gamma}\delta}{2+b\bar{\gamma}\delta}}\right],\vspace{-0.1cm}
\label{eq:AER}
\end{equation}
where $\bar{\gamma}$ is defined as $\bar{\gamma}=\Omegaul \bar{X}/\sigma_n^2=\Omegaul/\left(\lambda \sigma_n^2\right)$.
%In order to study the diversity order of the AER in (\ref{eq:AER}), we consider the high SNR regime, i.e., as $\bar{\gamma}\to\infty$. Using $\lim_{y\to\infty}\left(1-\!\sqrt{\frac{y}{2+y}}\right)\!=\!\frac{1}{y}$, the AER tends asymptotically (denoted by ``$\asymp$") to $P_e\asymp \frac{a}{2b\delta\bar{\gamma}}$. That is, a diversity order of 1 is achieved. 
%%%%%%%%%%%%%%%%%%%%%%%%%%%%%%%%%%%%%%%%%%%%%%%%%%%%%%%%%%%%%%%%%%%%%%%%%%%%%
\subsection{Outage Probability Analysis}
\label{ss:Outage_probability_analysis}
Since the CSI is unknown at the EH node, the node transmits data at a constant rate $R_0$ in bits/(channel use). Therefore, assuming a capacity-achieving code, a channel outage occurs whenever $R_0>\log_2(1+\gamma)\Rightarrow \gamma<\gamma_{\rm thr}$, where $\gamma$ is the UL instantaneous SNR and $\gamma_{\rm thr}=2^{R_0}-1$. Hence, the channel outage probability is
\begin{equation}
P_{\text{out}}\big|_{\rm channel}=\mathbb{P}\left(\gamma<\gamma_{\text{thr}}\right)\!=\!\mathbb{P}\left(\frac{M\Omegaul h}{\sigma_n^2}\!<\!\gamma_{\text{thr}}\right)\!=\!1-\e^{-\frac{\gamma_{\text{thr}}}{\delta\bar{\gamma}}}.
\label{eq:Pout_infinite}\vspace{-0.1cm}
\end{equation}
Define the total outage probability as the probability that either the EH node loses a transmission opportunity when not enough energy is available in its energy buffer, or the node transmits in the UL but a channel outage occurs. Hence, the total outage probability is given by
\begin{equation}
P_{\text{out}}\big|_{\rm total}=\left(1-\mathbb{P}_{\rm trans}\right)+\mathbb{P}_{\rm trans}P_{\text{out}}\big|_{\rm channel}.
\label{eq:Total_outage_Finite}
\end{equation}
Since $\mathbb{P}_{\rm trans}$ is independent of the SNR, the diversity order of the total outage probability is solely governed by that the channel outage probability. Using $\lim_{y\to\infty}\e^{-\frac{1}{y}}=1-\frac{1}{y}+o(y^{-2})$, the channel outage probability tends asymptotically to $P_{\text{out}}\big|_{\rm channel}\asymp\frac{\gamma_{\text{thr}}}{\delta\bar{\gamma}}$, i.e., with a diversity order of 1. Hence, the diversity order of the total outage probability assumes the maximum possible value of one independent of the capacity of the energy buffer.
%Similar to the AER, the diversity order of the outage probability is unaffected by the finite capacity of the energy buffer. We note that, although for a small buffer size, namely $K\leq 3M$, the approximate pdf on $[0,M]$ is not tight, it can be shown using the exact pdf in (\ref{eq:g_x_finite_exact}) that in this case the diversity order of the AER and the outage probability is still 1.
%%%%%%%%%%%%%%%%%%%%%%%%%%%%%%%%%%%%%%%%%%%%%%%%%%%%%%%%%%%%%%%%%%%%%%%%%%%%%%%%%%%%%%%%%%%%%%%%%%%%%%%%%%%%%%%%%%
\section{Simulation and Numerical Results}
\label{s:Simulations}
In this section, we evaluate the performance of the investigated energy management policy through simulations. The simulation parameters are listed in Table \ref{tab:simulation_parameters}. We sweep over $\tilde{\delta}=\tilde{M}/\widetilde{\bar{X}}=0.1,\ldots,1.5$, which corresponds to a desired UL transmit power of $M=0.6\,\mu{\rm W},\cdots,10.75\,\mu$W. %An AP with $12\,$dBi antenna gain and $5\,$dB noise figure transmits an RF signal of power $\Pdl=1\,$W at a center frequency of $\unit[915]{MHz}$ to power an EH node located at a distance of $5\,$m from the AP. The environment is characterized by a path loss exponent of $2.7$ and both UL and DL channels are assumed to be Rayleigh block faded. The EH node is equipped with a $2\,$dBi-gain antenna, a transmit amplifier with inefficiency of $\alpha=1.5$, and an EH module characterized by an RF-to-DC conversion efficiency of $\eta=0.7$. The EH node stores the harvested energy in a capacitor with a storage efficiency of $\beta=0.9$. This results in an average harvested energy of $\widetilde{\bar{X}}=10^{-5}\,$J. The total constant power consumption of the EH node is assumed to be $\Pc=0.2\,\mu$W. In the UL channel, the EH node transmits an information signal over a carrier frequency of $2.45\,$GHz and a bandwidth of $BW$.  Simulations are performed for capacitances  with storage size of $K\!=\!4\widetilde{\bar{X}}$, $7\widetilde{\bar{X}}$, and $20\widetilde{\bar{X}}$. % (which includes the circuit power consumption and the leakage power of the capacitor)
\begin{table}[!tp]
\caption{Simulation Parameters}
\begin{tabular}{@{}ll@{}}  \toprule   %\addlinespace[-5em]   
Parameter & Value \\ \midrule \addlinespace[0.5em]
AP to EH node distance & $5\,$m\\ \addlinespace[0.5em]
AP and EH node antenna gains & $12\,$dBi and $2\,$dBi\\ \addlinespace[0.5em]
DL transmit power & $\Pdl=1\,$W\\ \addlinespace[0.5em]
AP noise figure & $5\,$dB \\ \addlinespace[0.5em]
Bandwidth & $5\,$MHz \\ \addlinespace[0.5em]
Noise power & $\sigma_n^2=-103\,$dBm \\ \addlinespace[0.5em]
Path loss exponent of DL and UL channels & $2.7$ \\ \addlinespace[0.5em]
DL and UL channel models & Rayleigh block fading \\ \addlinespace[0.5em]
DL and UL center frequencies  & $915\,$MHz and $2.45\,$GHz\\ \addlinespace[0.5em]
RF-to-DC conversion efficiency & $\eta=0.7$\\ \addlinespace[0.5em]
Power amplifier inefficiency & $\alpha=1.5$ \\ \addlinespace[0.5em]
Storage efficiency & $\beta=0.9$ \\ \addlinespace[0.5em]
Average harvested energy & $\widetilde{\bar{X}}=\beta\bar{X}=10^{-5}\,$J\\ \addlinespace[0.5em]
SNR &  $\widetilde{\bar{\gamma}}=\Omegaul \widetilde{\bar{X}}/\sigma_n^2=24.6\,$dB \\ \addlinespace[0.5em]
Constant power consumption & $\Pc=0.2\,\mu$W\\ \addlinespace[0.5em]
Energy storage capacity & $K\!=\!4\tilde{M}$, $7\tilde{M}$, and $20\tilde{M}$\\ \addlinespace[0.5em]
\bottomrule
\end{tabular}
\label{tab:simulation_parameters}
\end{table}%%%%%%%%%%%%%%%%%%%%%%%%%%%%%%%%%%%%%%%%5

\begin{figure}[!tp]
\centering
\includegraphics[width=0.47\textwidth, trim=0.2cm 0 0 0, clip]{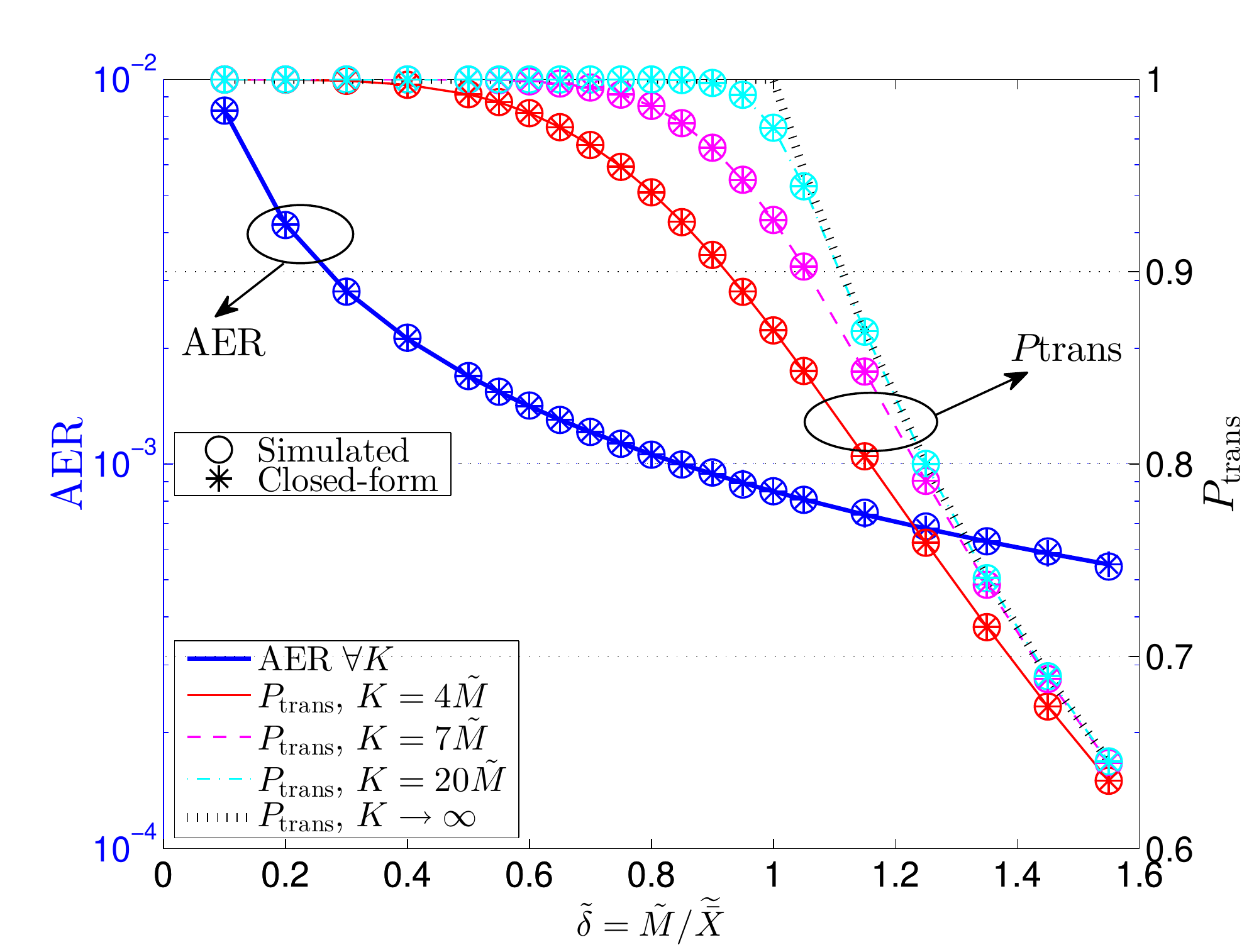}\vspace{-0.3cm}
\caption{AER and transmission probability for different buffer sizes and different desired UL transmit powers.}% of a BPSK signal for
\label{fig:BER_Rayleigh_finite_battery_sim_closedform_K_M}
\end{figure}
 
\begin{figure}[!tp]
\centering
\includegraphics[width=0.43\textwidth]{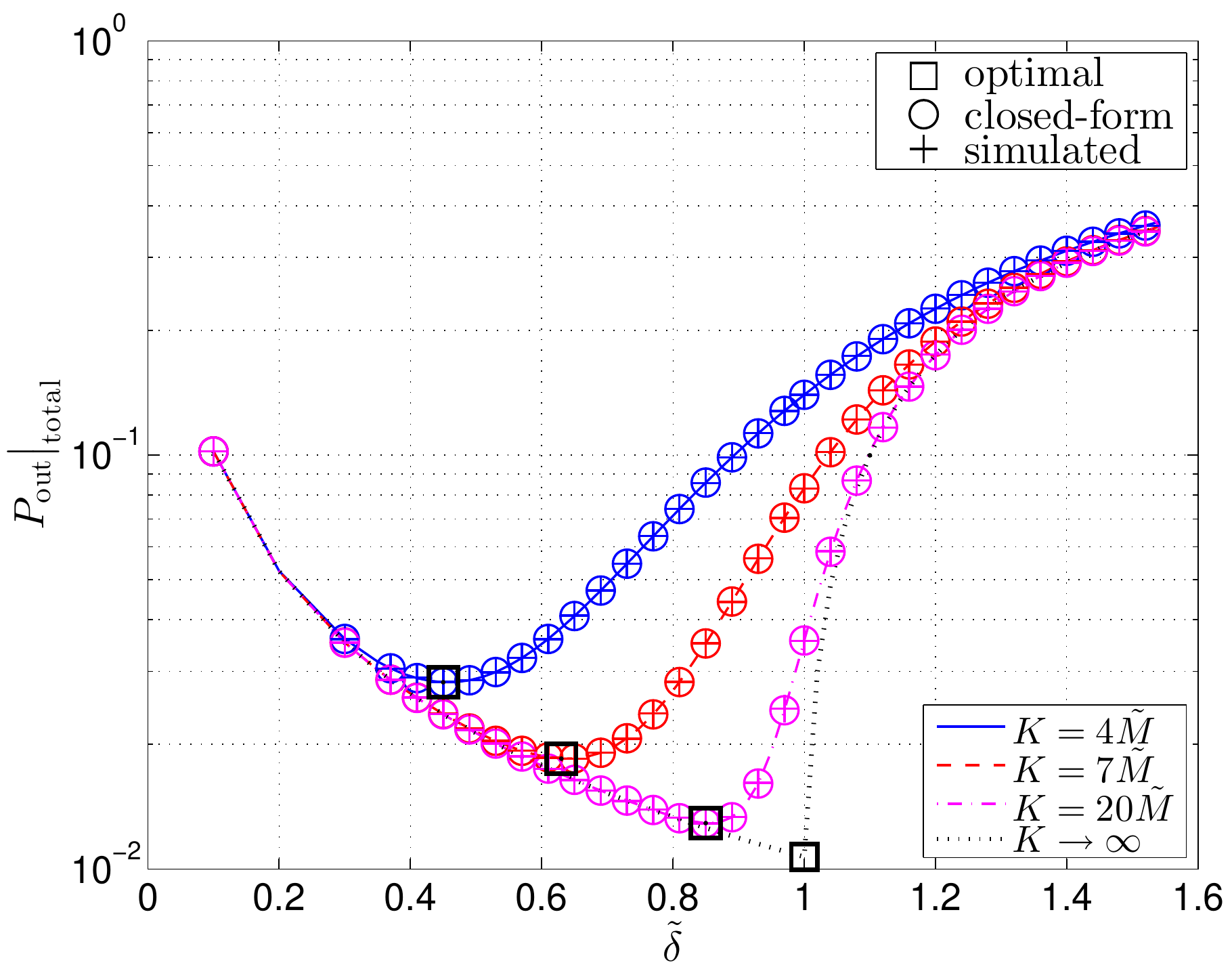}\vspace{-0.3cm}
\caption{Total outage probability for different buffer sizes and different SNRs.}
\label{fig:Outage_finite_sim_closed_form_K_M}
\end{figure}
Fig. \ref{fig:BER_Rayleigh_finite_battery_sim_closedform_K_M} shows the transmission probability together with the AER of the received signal at the AP when the EH node transmits a BPSK signal. The closed-form results shown in Fig. \ref{fig:BER_Rayleigh_finite_battery_sim_closedform_K_M} are obtained from the expressions in Section \ref{s:BER_outage_analysis}. We observe that the closed-form results agree perfectly with the simulated results. This emphasizes the tightness of the approximate pdf provided in Proposition \ref{prop:limiting_dist_Finite_exponential_approx}. Since the transmit power is constant whenever transmission is allowed, the AER is independent of the buffer size. However, a higher transmission probability is observed for larger energy buffer sizes. Furthermore, the larger the target transmit power $M$ (and therefore $\tilde{\delta}$), the lower the transmission probability. Thus, to achieve a certain desired maximum AER and a certain required transmission probability, the transmit power $M$ and the buffer size $K$ can be selected accordingly\footnote{We note that while instantaneous CSI knowledge is not required for the adopted transmission protocol, statistical CSI is needed if $M$ and $K$ are to be optimized.}.  %It is observed that for the considered energy management scheme, the optimal $\tilde{\delta}$, for which the AER is minimized, is always $\leq1$ and  increases with the storage capacity. As $K\to\infty$, the optimal $\tilde{\delta}\to1$. In other words, for the considered energy transmission policy, the optimal desired UL transmit power is higher for larger energy buffers, but it is always less than the average harvested power. Furthermore, our results show that, for a given storage capacity, the optimal desired UL power decreases with the SNR. This result is not shown here due to space limitations. %We also note that the outage probability behavior is similar to that of the AER, i.e., the same trend of the optimal UL transmit power with the energy storage capacitance and the SNR holds.  over a bandwidth of $BW=5\,$MHz. At room temperature ($300\,$K), this corresponds to a noise power of $-103\,$dBm at the AP and an SNR of $\widetilde{\bar{\gamma}}=\Omegaul \widetilde{\bar{X}}/\sigma_n^2=24.6\,$dB, where $\widetilde{\bar{X}}$ is given in Table \ref{tab:simulation_parameters}. 

Fig. \ref{fig:Outage_finite_sim_closed_form_K_M} shows the total outage probability when the EH node transmits at a constant rate of $2.1\,$bits/(channel use), i.e., for $\gamma_{\rm thr}=5\,$dB. The total outage probability in (\ref{eq:Total_outage_Finite})  agrees perfectly with the simulated outage probability. For the considered SNR and $\gamma_{\rm thr}$, it is observed that the optimal $\tilde{\delta}$, for which the total outage probability  is minimized, is always $\leq1$ and  increases with the storage capacity (as $K\to\infty$, the optimal $\tilde{\delta}\to1$). In other words, the optimal desired UL transmit power is higher for larger energy buffers, but it is always less than the average harvested power. However, additional results --not shown here-- reveal that in the high outage regime ($P_{\text{out}}\big|_{\rm total}>0.5$), the opposite  behavior was observed. %  a constant rate of $2.0574\,$bits/(channel use)
\section{Conclusions}
\label{s:conclusion}
We analyzed an on-off transmission policy for an EH node with finite/infinite energy storage.  Using the theory of discrete-time Markov chains on a general state space, the limiting distribution of the stored energy in the buffer was analyzed for a general i.i.d. EH process and obtained in closed form for an exponential EH process. An approximate limiting distribution was proposed for finite-size buffers and shown to be tight. Based on the performance analysis, the UL transmit power and the energy buffer size can be designed to achieve a desired AER/ channel outage probability for a given required transmission probability. Our results revealed that, except for high outage probability $(>0.5)$, the optimal desired transmit power of the EH node was always less than the average harvested power and increased with the storage capacity.% Furthermore, the diversity order of the total outage probability was found to be independent of the storage capacity.

%In this paper, we considered a simple online energy neutral transmission policy for an EH node, with finite/infinite energy storage. Using the theory of discrete-time Markov chains on a general state space, we analyzed the limiting distribution of the stored energy in the buffer for a general i.i.d. EH process and obtained it in closed form for an exponential EH process. An exponential-type approximation of the stored content distribution is proposed for finite-size buffers and shown to be tight. Our results reveal that the diversity orders of the AER and the outage probability are not affected by a finite energy storage capacity. Furthermore, for the considered transmission scheme, it was shown that the optimal desired transmit power of the EH node is always less than the average harvested power and increases with the storage capacity but decreases with the SNR.\vspace{-0.1cm} %Future work may include the use of a finite-size information buffer, consideration of different power consumption modes of the EH node, and an extension to a network of EH nodes with different EH opportunities and different assigned transmission load.Imperfections of the energy buffer and the circuit power consumption were also considered.
%%%%%%%%%%%%%%%%%%%%%%%%%%%%%%%%%%%%%%%%%%%%%%%%%%%%%%%%%%%%%%%%%%%%%%%%%%%%%%%%%%%%%%%%%%%%%%%%%%%%%%%%%%%%%%%%%%%
% if have a single appendix:
%\appendix[Proof of the Zonklar Equations]
% or
\appendices  % for no appendix heading
\section{Proof of Corollary \ref{theo:stationary_dist_infinite_exp}}
\label{app:stationary_dist_infinite_exp}
Define $g_1(x)=g(x)$, $0<x\leq M$ and $g_2(x)=g(x)$, $x>M$. Then, substituting $f(x)\!=\!\lambda\e^{-\lambda x}$ in (\ref{eq:g_integral_eqn_infinite}), we get\vspace{-0.1cm} % and using $\delta\!=\!\lambda M$,
\begin{align}
g_1(x)\!&=\!\!\!\int\limits_0^x\!\!\lambda\e^{-\lambda (x-u)} g_1(u) \dd u \!+\!\!\!\!\!\!\int\limits_{M}^{M+x}\!\!\!\!\lambda\e^{-\lambda (x-u+M)} g_2(u) \dd u\label{eq:integral_eqn_INF_battery_exp_input_parta}\\
g_2(x)\!&=\!\!\!\int\limits_0^M\!\!\lambda\e^{-\lambda (x-u)} g_1(u) \dd u \!+\!\!\!\!\!\!\int\limits_{M}^{M+x}\!\!\!\!\lambda\e^{-\lambda (x-u+M)} g_2(u) \dd u.\label{eq:integral_eqn_INF_battery_exp_input_partb}
\end{align}
When $M>\bar{X}$, i.e., $\delta>1$, we know from Theorem \ref{theo:stationary_dist_infinite} that $g_1(x)$ and $g_2(x)$ have unique solutions. We postulate an exponential-type solution for $g_2(x)$ given by $g_2(x)=k\e^{px}$, which is required (denoted by $\req$) to satisfy (\ref{eq:integral_eqn_INF_battery_exp_input_partb}), i.e.,
\begin{equation}
k\e^{px}\req \e^{-\lambda x}\left[\lambda\int\limits_0^M\e^{\lambda u}g_1(u)\dd u-k\frac{\lambda \e^{pM}}{\lambda+p}\right]+\frac{\lambda \e^{pM}}{\lambda+p} k\e^{px}.
\label{eq:infinite_exp_g2}
\end{equation} 
For the postulated $g_2(x)$ to be correct, (\ref{eq:infinite_exp_g2}) must hold, i.e., the coefficient of $\e^{px}$ in the last term of (\ref{eq:infinite_exp_g2}) must be $k$, which implies $\lambda\e^{pM}\!=\!\lambda\!+\!p$. This condition should also reduce the coefficient of $\e^{-\lambda x}$ to zero, i.e., $\int_0^M\e^{\lambda u}g_1(u)\dd u=\frac{k}{\lambda}$ must also hold. From $\lambda\e^{pM}\!=\!\lambda+p$, $p$ can be obtained using the Lambert W function, i.e., $p\!=\!\left(-\delta-W_{0}(-\delta\e^{-\delta})\right)/M$, which is $<0$ since $\delta>1$. Now, to get $g_1(x)$, we substitute $g_2(x)=k\e^{px}$ in (\ref{eq:integral_eqn_INF_battery_exp_input_parta}) and use $\lambda\e^{pM}\!=\!\lambda\!+\!p$ to get
\begin{equation}
g_1(x)=\int\limits_0^x\lambda\e^{-\lambda (x-u)} g_1(u) \dd u +\underbrace{k\left(e^{px}-\e^{-\lambda x}\right)}_{r(x)}.
\label{eq:g1_step}
\end{equation}
Eq. (\ref{eq:g1_step}) is a Volterra integral equation of the second kind, whose solution is given by \cite[eq. 2.2.1]{polyanin2008handbook}
\begin{equation}
g_1(x)=r(x)+\lambda\int\limits_0^x r(t)\dd t=k\frac{\lambda}{p}\e^{pM}\left[\e^{px}-1\right].
\label{eq:g1_step2}
\end{equation}
Now, $k$ can be obtained from the unit area condition on $g(x)$, namely, $\int_0^M g_1(x)+\int_M^\infty g_2(x)\!\!=\!\!1 \Rightarrow k=\frac{-p}{M(\lambda+p)}=\frac{-p}{\delta\e^{pM}}$. Substituting $k$ back in (\ref{eq:g1_step2}), we get $g_1(x)=\frac{1}{M}\left(1-\e^{px}\right)$. Now, it can be shown that $g_1(x)$ satisfies $\int_0^M\e^{\lambda u}g_1(u)\dd u=\frac{k}{\lambda}$. Hence, the postulated $g_2(x)=k\e^{px}$ and the resulting $g_1(x)=\frac{1}{M}\left(1-\e^{px}\right)$ are the unique solutions for the coupled equations in (\ref{eq:integral_eqn_INF_battery_exp_input_parta}) and (\ref{eq:integral_eqn_INF_battery_exp_input_partb}). Finally, the transmission probability can be obtained as $\mathbb{P}(\Pul(i)=M)=\mathbb{P}(B(i)>M)=1-\int_{0}^{M} g_1(x)\dd x=\frac{1}{\delta}$. This completes the proof.%\vspace{-0.1cm}
\section{Proof of Proposition \ref{prop:limiting_dist_Finite_exponential_approx}}
\label{app:limiting_dist_Finite_exponential_approx}
First, the approximate pdf $\tilde{g}(x)$ in the range $M\leq x<K-n_cM$ is identical to that in \cite[Proposition 1]{Morsi_ICC2014}. In particular, the proposed exponential approximation $\tilde{g}(x)=c\e^{dx}$ is motivated by the exponential distribution of the buffer content for an infinite buffer size in the range $x>M$ given in (\ref{eq:g_x_infinite_buffer}), c.f. Corollary \ref{theo:stationary_dist_infinite_exp}. With $\tilde{g}(x)=c\e^{d x}$, $d$ is obtained in exactly the same manner as $p$ for an infinite-size buffer, c.f. Appendix \ref{app:stationary_dist_infinite_exp}. However, unlike in the infinite-size buffer case,  the amount of energy in a finite-size buffer with $\delta \leq 1$ still convergences to a limiting distribution, c.f. Theorem \ref{theo:limiting_dist_Finite}. This explains the use of the Lambert W function with two different orders in (\ref{eq:d_approx}) to consider the two cases of $\delta \leq 1$ and $\delta>1$. Note that $d$ in (\ref{eq:d_approx}) satisfies $d>0$ for $\delta<1$ (an exponentially increasing $\tilde{g}(x)$), $d<0$ for $\delta>1$ (an exponentially decaying $\tilde{g}(x)$), and $d=0$ for $\delta=1$ (a uniformly distributed $\tilde{g}(x)$). Similar to \cite[Proposition 1]{Morsi_ICC2014}, we obtain $c$ from the exact $g(x)$ in (\ref{eq:g_x_finite_exact_partb}) at $x=0$ and $n=l-1$ after replacing $\pi(K)$ by $\tilde{\pi}(K)$. Consider next the approximate pdf $\tilde{g}_{l-1}(x)$ in the range $0\leq x<M$. $\tilde{g}_{l-1}(x)$ is obtained in exactly the same manner as $g_1(x)$ in the infinite-size buffer case, c.f. Appendix \ref{app:stationary_dist_infinite_exp}. In particular, in (\ref{eq:integral_eqn_INF_battery_exp_input_parta}), we use $\tilde{g}(x)$ in place of $g_2(x)$ and $\tilde{g}_{l-1}(x)$ in place of $g_1(x)$ and obtain $\tilde{g}_{l-1}(x)$ by solving a Volterra integral equation of the second kind. Hence, $\tilde{g}_{l-1}(x)$ reduces to (\ref{eq:g1_step2}) after replacing $k$ with $c$ and $p$ with $d$.
Next, in the range $k-n_cM \leq x < K$, we use the exact $g(x)$ in (\ref{eq:g_x_finite_exact_partb}) after replacing $\pi(K)$ by $\tilde{\pi}(K)$. The reason why we do not use $\tilde{g}(x)=c\e^{dx}$ to approximate  the pdf in the whole range of $M\leq x<K$ is that although the approximate pdf $\tilde{g}(x)$ is tight for most of this range (even with $n_c=2$), it is loose at the tail of the distribution (namely for the last two sections of the pdf, i.e., $n=0,1$). As far as the performance analysis is concerned, only the pdf in the range $[0,M]$ is needed, c.f. Section \ref{s:BER_outage_analysis}.
Finally, the approximate full buffer probability $\tilde{\pi}(K)$ in (\ref{eq:Pi_K_approx}) guarantees a unit area distribution, i.e.,
\begin{equation}
\int\limits_{0}^{M} \tilde{g}_{l-1}(x) \dd x +\!\!\!\int\limits_{M}^{K-n_c M}\!\!\!\!\!\!\tilde{g}(x) \dd x+ \sum\limits_{n=0}^{n_c-1}\int\limits_{K-(n+1)M}^{K-nM}\!\!\!\!\!\!\!\tilde{g}_n(x) \dd x+\tilde{\pi}(K)=1.
\label{eq:unit_area_pdf_approx_only_output_M_step1}
\end{equation}
Next, we study the error associated with the proposed approximation. Using (\ref{eq:g_x_finite_exact_parta}) and (\ref{eq:gtilde_l_1}), the approximation error in the range $[0,M]$, $e(x)=g_{l-1}(x)-\tilde{g}_{l-1}(x)$, is given by
\small
\begin{equation}
\begin{aligned}
&e(x)\!=\!\pi(K)\lambda\e^{\delta(l-1)}\!\left[R(1-l,l-2)-\e^{-\lambda x}R(1-l+\frac{x}{M},l-2)\right]\\
&-\tilde{\pi}(K)\lambda\e^{\delta l}\left[R(-l,l-1)-\e^{-\delta}R(1-l,l-2)\right]\frac{\lambda\e^{dM}}{d}(\e^{dx}-1),
\end{aligned}
\label{eq:approx_error}
\end{equation}
\normalsize
where we define $R(y,l)=\sum_{q=0}^{l}(y+q)^q\frac{(\delta\e^{-\delta})^q}{q!}$. Using (\ref{eq:approx_error}) at $n_c=2$, the maximum error percentage $e(x)/g(x)$ in the range of $x=[0,M]$ with $\delta\geq 0.5$ is less than 8.3\% for $K=3M$ and 1.64\% for $K=4M$. Next, we show that the  approximation error tends to zero as the buffer size tends to infinity (for $\delta>1$, otherwise, a stationary distribution does not exist, c.f. Theorem  \ref{theo:no_stationary_dist}). Using the asymptotic expansion of the exponential of the Lambert W function given by $\frac{\e^{-aW_j(-z)}}{1+W_j(-z)}=\sum_{q=0}^{\infty}(a+q)^q\frac{z^q}{q!}$, it can be shown that $g_{l-1}(x)$ in (\ref{eq:g_x_finite_exact_parta}) and $\tilde{g}_{l-1}(x)$ in (\ref{eq:gtilde_l_1}) tend asymptotically (as $l\to\infty$) to
\small
\begin{equation}
\lim\limits_{l\to\infty}g_{l-1}(x)=\lim\limits_{l\to\infty} \pi(K)A(l,\delta,\lambda)\left(1-\e^{-\left(\lambda+\frac{W_j(-\delta\e^{-\delta})}{M}\right)x }\right)
\label{eq:g_l_1_asymp}
\end{equation}
\begin{equation}
\lim\limits_{l\to\infty}\tilde{g}_{l-1}(x)=\lim\limits_{l\to\infty} \tilde{\pi}(K)A(l,\delta,\lambda)\left(1-\e^{-\left(\lambda+\frac{W_j(-\delta\e^{-\delta})}{M}\right)x }\right),
\label{eq:gtilde_l_1_asymp}
\end{equation}
\normalsize
where $A(l,\delta,\lambda)=\left(\lambda\e^{(l-1)(\delta+W_j(-\delta\e^{-\delta}))}\right)/\left(1+W_j(-\delta\e^{-\delta})\right)$. Hence, (\ref{eq:g_l_1_asymp}) and (\ref{eq:gtilde_l_1_asymp}) differ only in the atom value. Now, using $\pi(K)$ in (\ref{eq:PiK_exact_Transmit_M_only_step}) and $\tilde{\pi}(K)$ in (\ref{eq:unit_area_pdf_approx_only_output_M_step1}) together with $\lim_{l\to\infty}\pi(K)=\lim_{l\to\infty}\tilde{\pi}(K)=0$ (i.e., the atom vanishes as $K\to\infty$), it can be shown that
\begin{equation}
\lim\limits_{l\to\infty} \pi(K)A(l,\delta,\lambda)=\lim\limits_{l\to\infty}\tilde{\pi}(K)A(l,\delta,\lambda)=\frac{1}{M}.
\label{eq:common_term_with_pi_and_e_power_l}
\end{equation}
Consequently,
\begin{equation}
\begin{aligned}
\lim\limits_{l\to\infty}g_{l-1}(x)&=\lim\limits_{l\to\infty}\tilde{g}_{l-1}(x)=\frac{1}{M}\left(1-\e^{-\left(\lambda+\frac{W_j(-\delta\e^{-\delta})}{M}\right)x }\right)\\
&=\frac{1}{M}\left(1-\e^{px}\right),
\end{aligned}
\label{eq:g_l_1_equivalence_to_inf_buffer}
\end{equation}
as obtained in Corollary \ref{theo:stationary_dist_infinite_exp} and the approximation error tends to zero. This completes the proof.
\bibliographystyle{IEEEtran}
\bibliography{references}
\end{document}